\newcommand{\B}{\mathcal{B}}
\newcommand{\C}{\mathbb{C}}
\newcommand{\ee}{\mathbf{e}}
\newcommand{\F}{\mathcal{F}}
\newcommand{\Hil}{\mathcal{H}}
\newcommand{\ii}{\mathbf{i}}
\newcommand{\kk}{\mathbf{k}}
\newcommand{\N}{\mathbb{N}}
\newcommand{\R}{\mathbb{R}}
\newcommand{\T}{\mathbb{T}}
\newcommand{\vv}{\mathbf{v}}
\newcommand{\ww}{\mathbf{w}}
\newcommand{\xx}{\mathbf{x}}
\newcommand{\Z}{\mathbb{Z}}
\theoremstyle{plain}
\newtheorem{theorem}{Theorem}[section]
\theoremstyle{plain}
\newtheorem{proposition}[theorem]{Proposition}
\theoremstyle{plain}
\newtheorem{lemma}[theorem]{Lemma}
\theoremstyle{plain}
\newtheorem{corollary}[theorem]{Corollary}
\theoremstyle{plain}
\theoremstyle{plain}
\newtheorem{definition}[theorem]{Definition}
\theoremstyle{plain}
\theoremstyle{remark}
\newtheorem{remark}[theorem]{Remark}
\theoremstyle{remark}
\newtheorem{example}[theorem]{Example}
\theoremstyle{remark}
\title
[multi-dimensional QWs]
{Convergence theorems on multi-dimensional homogeneous quantum walks}
\author{Hiroki Sako}
\address
{Faculty of Engineering, Niigata University, Nishi-ku, Niigata 950-2181, Japan}
\email
{sako@eng.niigata-u.ac.jp}
\subjclass[2010]{46L99, 60F05, 81Q99}
\begin{document}
\maketitle

\begin{abstract}
We propose a general framework for quantum walks on $d$-dimensional spaces.
We investigate asymptotic behavior of these walks.
We prove that every homogeneous walks with finite degree of freedom has limit distribution.
This theorem can also be applied to every crystal lattice.
In this theorem, it is not necessary to assume that the support of the initial unit vector is finite.
We also pay attention on $1$-cocycles,
which is related to Heisenberg representation of time evolution of observables.
For homogeneous walks with finite degree of freedom, convergence of averages of $1$-cocycles associated to the position observable is also proved.
%We propose a general framework for quantum walks on d-dimensional spaces. We investigate asymptotic behavior of these walks. Among them, existence of limit distribution of homogeneous walks is proved. In this theorem, the support of the initial unit vector is not necessarily finite. We also pay attention on 1-cocycles. For homogeneous walks, convergence of averages of 1-cocycles associated to the position observable is also proved.
\end{abstract}

\section{Introduction}
A quantum walk is a dynamical system given by a unitary operator $U$ on a Hilbert space $\Hil$.
The space $\Hil$ is often associated to some metric spaces like $\Z^d$ and $\R^d$. 
The first purpose of this paper is to propose a general framework for quantum walks on $d$-dimensional spaces. See Subsection \ref{subsection: definition}.
This new framework includes the quantum walks studied in \cite{MBSS}, \cite{Tregenna}, \cite{GJS}, \cite{InuiKonishiKonno}, \cite{KonnoJMSJ}, \cite{OliveiraPortugalDonangelo}, \cite{WatabeKobayashiKatoriKonno}, and in many other papers.

The second purpose is to observe asymptotic behavior of such walks.
For an initial unit vector $\xi$ in $\Hil$, we obtain a sequence $\{U^t \xi \}_{t \in \N}$ of unit vectors.
It gives a sequence of probability measures $\{p_t\}_{t \in \N}$, which is related to the probability interpretation of quantum mechanics.
It is proved in Theorem \ref{theorem: asymptotic concentration} that the probability measures are asymptotically concentrated on some small region.
This theorem can be applied to walks which is not necessarily space-homogeneous.
In this theorem, we need some very weak condition on the walk $U$ called smoothness defined  in Subsection \ref{subsection: regularity}.

The third purpose is to show existence of limit distributions for analytic homogeneous quantum walks with respect to an arbitrary initial unit vector in Theorem \ref{theorem: limit distribution}.
Analyticity is so mild condition that almost all the known quantum walks satisfy it.
For the proof, we study $1$-cocycles of such a walk in Subsection \ref{subsection: logarithmic derivatives}.
$1$-cocycles are related to Heisenberg representation of time evolution of observables.
Let $U$ be a quantum walk acting on a Hilbert space $\Hil$, and let $D$ be some observable of position.
The sequence of observables $\{U^{-t} D U^t\}_t$ stands for the time evolution of the observable $D$.
The sequence $\{c_t = U^{-t} D U^t - D\}_t$ is the most important example of $1$-cocycle and is called a logarithmic derivative.

For more than 20 years, the most important 
subject of quantum walks is the following form of unitary operator:
\begin{eqnarray}\label{equation: concrete QW}
U = 
\left(
\begin{array}{cc}
S & 0\\
0 & S^{-1}
\end{array}
\right)
\left(
\begin{array}{cc}
\alpha & - \overline{\beta}\\
\beta & \overline{\alpha}
\end{array}
\right)
\end{eqnarray}
and their higher dimensional version.
Here, $\alpha$ and $\beta$ are complex numbers satisfying 
$|\alpha|^2 + |\beta|^2 = 1$, and the operator $S$ is the bilateral shift on the Hilbert space $\ell_2 (\Z)$ consisting of the square summable functions on the integer group $\Z$.
The operator $U$ acts on the Hilbert space $\ell_2(\Z) \otimes \C^2$.
It has been already shown that several kinds of space-homogeneous quantum walks have limit distributions $\lim_{t \to \infty} p_t$.
Although their argument was not rigorous,
Grimmett, Janson, and Scudo presented an excellent idea to show such a convergence theorem in \cite{GJS}.
This paper improves their result in the following aspects:
\begin{itemize}
\item
Our paper clarify what kind of property of the operator $U$ really works in the proof of the convergence theorem.
It turns out that analyticity of the operator works.
\item
We no longer need smooth eigenvalue functions of the inverse Fourier transform of the quantum walk.
\footnote{In fact, it is possible to construct a quantum walk whose eigenvalue function is not smooth.
Because we need many pages for the construction, we omit.
Our convergence theorem also works for such an example.}
\item
Our theorem can be applied to many kinds 
of homogeneous quantum walks, because
analyticity for the quantum walks is a weak condition.
For example, a quantum walks on arbitrary crystal lattice with translation symmetry with finite propagation satisfies this condition.
\item
Our theorem does not require locality of its initial unit vector.
\end{itemize}

%Moreover, some literature used inappropriate claim on quantum walks. 
%Sometimes, it was believed that the Fourier dual of multi-dimensional homogeneous had smooth eigenvalue functions.
%This does not hold true in general.
%Indeed, there exists a $2$-dimensional analytic homogeneous quantum walk
%satisfying that the locally defined eigenvalue functions are not in the C$^1$-class.
%See Section \ref{section: exotic example}.

\section{Definition of quantum walks and their regularity}

\subsection{Definition of multi-dimensional QWs}
\label{subsection: definition}
Throughout this paper, $\B(\Hil)$ stands for the set of all the bounded linear operators on a Hilbert space $\Hil$.

\begin{definition}\label{definition: QWs}
Let $d$ be a natural number.
A triple $\left( \Hil, \left(U^t \right)_{t \in \Z}, E \right)$ is said to be 
{\rm a $d$-dimensional quantum walk}, 
if the following conditions holds:
\begin{enumerate}
\item
$\Hil$ is a Hilbert space,
\item
$\left( U^t \right)_{t \in \Z}$ is a unitary representation of $\Z$ on $\Hil$.
\item\label{item: self-adjoint operators}
$E$ is a Borel measure on $\R^d$ whose values are orthogonal projections in $\B(\Hil)$ such that $E(\R^d) = \mathrm{id}_{\Hil}$.
\end{enumerate}
The measure $E$ in the last item is called {\rm a spectral measure}.
\end{definition}

In many references, researchers concentrate on quantum walks on the Hilbert space $\ell_2(\Z^d) \otimes \C^d$.

\begin{example}\label{example: standard spectral measure}
For the Hilbert space $\ell_2(\Z^d) \otimes \C^d$,
the measure $E$ is defined as follows:
for every subset $\Omega \subset \R^d$, the operator $E(\Omega)$ is the orthogonal projection from $\ell_2(\Z^d) \otimes \C^d$
onto $\ell_2(\Omega \cap \Z^d) \otimes \C^d$.
A unitary operator $U$ on $\ell_2(\Z^d) \otimes \C^d$
defines a triplet $\left( \Hil, \left(U^t \right)_{t \in \Z}, E \right)$,
which we call {\it a quantum walk} in this paper.
\qed
\end{example}

Throughout this paper, we often make use of the following items.
\begin{itemize}
\item
For $i \in \{1, \cdots, d\}$, a self-adjoint operator $D_i$ on $\Hil$ is defined by
\[D_i = \int_{(x_i) \in \R^d} x_i E(d x_1 \cdots d x_i \cdots d x_d).\]
\item
We denote by $( \cdot , \cdot )$ the standard real-valued inner product on $\R^d$.
The projection-valued measure $E$ gives a unitary representation of $\R^d$:
\[\sigma \colon \kk \mapsto \int_{\xx \in \R^d} \exp(\ii (\kk, \xx)) E(d\xx).\]
This representation is given by 
$\sigma((k_i)_i) = \prod_{i = 1}^d \exp(\ii k_i D_i)$.
The representation $\sigma$ is continuous with respect to the strong operator topology.
Throughout this paper, $\ii$ stands for the imaginary unit,
and $i$ is a natural number.
\item
We denote by $\alpha_\kk$ 
the $d$-parameter automorphism group on $\B(\Hil)$ defined by
\[\alpha_\kk(X) = \sigma(\kk) X \sigma(- \kk).\]
\end{itemize}

In many cases, $D_i$ stands for the position observable.
However, Definition \ref {definition: QWs} allows us to treat quantum walks in more flexible manner.
The self-adjoint operator $D_i$ can be other quantum mechanical observables like momentum.

\subsection{Regularity on bounded operators}
\label{subsection: regularity}

\begin{definition}
Let $X$ be a bounded operator on $\Hil$.
The operator $X$ is said to be {\rm uniform} with respect to $E$, 
if the mapping $\R^d \ni \kk \mapsto \alpha_\kk(X) \in \B(\Hil)$ is continuous with respect to the operator norm on $\B(\Hil)$.
\end{definition}
If the operator $X$ is uniform, then the mapping $\kk \mapsto \alpha_\kk(X)$  is uniformly continuous.
It is easy to see that the set of uniform operators is a closed subset 
$\B(\Hil)$.
Using the operator norm on $\B(\Hil)$, we may consider partial derivatives.

\begin{definition}
For $i \in \{1, \cdots, d\}$, define the {\rm partial derivative} $\partial_i (X)$ by the norm limit
\begin{eqnarray}\label{equation: partial derivative}
\partial_i (X) = \lim_{k \to 0} \frac{\exp(\ii k D_i) X \exp(- \ii k D_i) - X}{k},
\end{eqnarray}
if it exists.
\end{definition}

\begin{example}\label{example: bilateral shift}
Let $\ell_2(\Z^d)$ be the Hilbert space of all the square summable functions on $\Z^d$.
For $\xx \in \Z^d$, denote by $\delta_\xx$ the
definition function of $\{\xx\} \subset \Z^d$.
Let $E$ be the standard spectral measure on $\R^d$ defined in Example
\ref{example: standard spectral measure}. 
Let $j$ be an element of $\{1, \cdots, d\}$.
Define the unitary operator $S_j$ by
$\delta_{\xx} \mapsto \delta_{\xx + \ee_j}$,
where $\{\ee_1, \cdots, \ee_j, \cdots, \ee_d\}$ stands for the standard basis of $\Z^d$.

By the straightforward calculation,
for every $\kk = (k_1, \cdots, k_j, \cdots, k_d) \in \R^d$,
we have 
\begin{eqnarray*}\label{equation: the action of kk}
\alpha_\kk(S_j) = \exp(\ii k_j) S_j.
\end{eqnarray*}
Since $\exp(\ii k_j)$ is a continuous function on $\R^d$, it turns out that $S_j$ is uniform.

By equation (\ref{equation: partial derivative}),
for $i \neq j$,
we have
\begin{eqnarray*}
\partial_i (S_j) 
= \lim_{k \to 0} \frac{\exp(\ii k D_i) S_j \exp(- \ii k D_i) - S_j}{k} = 0.
\end{eqnarray*}
We also have
\begin{eqnarray*}
\partial_j (S_j) 
&=& \lim_{k_j \to 0} \frac{\exp(\ii k_j D_i) S_j \exp(- \ii k_j D_i) - S_j}{k_j} \\
&=& \lim_{k_j \to 0} \frac{\exp(\ii k_j) - 1}{k_j} S_j\\
&=& \ii S_j.
\end{eqnarray*}
We can also calculate the higher derivatives as above.
It follows that the unitary operator $S_j$ is smooth.

In particular, the bilateral shift $S$ on $\ell_2(\Z)$ is smooth. 
\qed
\end{example}

\begin{example}
Consider the case that $d = 1$.
Let $U$ be the most famous $1$-dimensional quantum walk 
presented in equation (\ref{equation: concrete QW}) 
acting on $\ell_2(\Z) \otimes \C^2$.
Let $E$ be the standard spectral measure on $\ell_2(\Z) \otimes \C^2$.
Since the bilateral shift $S$ on $\ell_2(\Z)$ is smooth.
$U$ is smooth with respect to $E$.
\qed
\end{example}

\begin{lemma}\label{lemma: partial derivative and commutator}
If $\partial_i (X)$ exists, then
$X (\mathrm{dom}(D_i)) \subset \mathrm{dom}(D_i)$, and $\partial_i (X)$ 
is the unique extension of the commutator
$[\ii D_i, X] = \ii (D_i X - X D_i)$
defined on the domain $\mathrm{dom}(D_i)$.
For every $\kk \in \R^d$,
$\partial_i (\alpha_\kk(X))$ exists and is equal to $\alpha_\kk( \partial_i (X))$.
\end{lemma}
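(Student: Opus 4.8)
The plan is to reduce everything to Stone's theorem and the spectral calculus of the commuting operators $D_1, \dots, D_d$. Abbreviate $u_k = \exp(\ii k D_i)$, the strongly continuous one-parameter unitary group with generator $\ii D_i$. The one fact I will use repeatedly is the Stone-theoretic description of the domain: a vector $\zeta \in \Hil$ belongs to $\mathrm{dom}(D_i)$ if and only if $\lim_{k \to 0} k^{-1}(u_k \zeta - \zeta)$ exists in $\Hil$, and in that case the limit equals $\ii D_i \zeta$.

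For the first assertion I would fix $\xi \in \mathrm{dom}(D_i)$ and decompose the difference quotient of $X\xi$ as
\[
\frac{u_k X \xi - X \xi}{k} = \frac{u_k X u_{-k} - X}{k}\, u_k \xi + X\,\frac{u_k \xi - \xi}{k}.
\]
Because $X$ is bounded and $\xi \in \mathrm{dom}(D_i)$, the second summand converges in norm to $\ii X D_i \xi$. Writing $A_k = k^{-1}(u_k X u_{-k} - X)$, the hypothesis gives $A_k \to \partial_i(X)$ in operator norm, while $u_k \xi \to \xi$ in $\Hil$; the estimate
\[
\| A_k u_k \xi - \partial_i(X)\xi \| \le \| A_k - \partial_i(X) \|\,\|\xi\| + \|\partial_i(X)\|\,\| u_k \xi - \xi \|
\]
then forces $A_k u_k \xi \to \partial_i(X) \xi$. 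Hence the left-hand side converges, so by the domain characterization $X\xi \in \mathrm{dom}(D_i)$ and $\ii D_i X \xi = \partial_i(X)\xi + \ii X D_i \xi$. Rearranging yields $\partial_i(X)\xi = \ii(D_i X - X D_i)\xi$ for all $\xi \in \mathrm{dom}(D_i)$, identifying $\partial_i(X)$ on $\mathrm{dom}(D_i)$ with the commutator $[\ii D_i, X]$. Uniqueness of the extension is then automatic: $D_i$ is self-adjoint, hence $\mathrm{dom}(D_i)$ is dense, and the bounded operator $\partial_i(X)$, being a norm limit of bounded operators, is determined by its values on any dense subspace.

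For the covariance identity I would use that $D_1, \dots, D_d$ are all functions of the single projection-valued measure $E$, so they commute; in particular $u_k$ commutes with $\sigma(\kk) = \prod_j \exp(\ii k_j D_j)$. Conjugation therefore intertwines with $\alpha_\kk$:
\[
u_k\, \alpha_\kk(X)\, u_{-k} = \sigma(\kk)\, u_k X u_{-k}\, \sigma(-\kk) = \alpha_\kk\!\left( u_k X u_{-k} \right),
\]
whence $k^{-1}\!\left( u_k \alpha_\kk(X) u_{-k} - \alpha_\kk(X) \right) = \alpha_\kk(A_k)$. Since $\sigma(\kk)$ is unitary, $\alpha_\kk$ is an isometric automorphism of $\B(\Hil)$, hence norm-continuous, so $\alpha_\kk(A_k) \to \alpha_\kk(\partial_i(X))$. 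This shows $\partial_i(\alpha_\kk(X))$ exists and equals $\alpha_\kk(\partial_i(X))$.

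The main obstacle is the passage in the second paragraph from the norm-convergence of the conjugation quotients $A_k$ to the pointwise statement that $X$ preserves $\mathrm{dom}(D_i)$. The subtlety is that one must apply $A_k$ to the moving vector $u_k \xi$ rather than to $\xi$ in order for the algebraic cancellation to close up, and then combine the norm-convergence of $A_k$ with the merely strong convergence of $u_k \xi$; this is exactly what the displayed estimate handles. Once this product-convergence step is secured, the commutator identification, the uniqueness, and the covariance are all formal consequences of Stone's theorem, density of $\mathrm{dom}(D_i)$, and commutativity of the spectral data.
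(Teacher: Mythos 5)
Your proposal is correct, and it splits naturally into two halves that compare differently against the paper. For the covariance identity $\partial_i(\alpha_\kk(X)) = \alpha_\kk(\partial_i(X))$ you do exactly what the paper does: pull $\sigma(\kk)$ and $\sigma(-\kk)$ out of the difference quotient using commutativity of $\exp(\ii k D_i)$ with $\sigma(\kk)$ (both are functions of the same projection-valued measure $E$) and pass the norm limit through the isometric automorphism $\alpha_\kk$; you merely spell out the commutation step that the paper leaves implicit. For the first assertion, however, the paper offers no argument at all --- it declares it ``well-known'' and cites \cite[Lemma 2.4]{Sako} --- whereas you supply a complete, self-contained proof: the algebraic identity
\[
\frac{u_k X \xi - X \xi}{k} = A_k\, u_k \xi + X\,\frac{u_k \xi - \xi}{k},
\qquad A_k = \frac{u_k X u_{-k} - X}{k},
\]
combined with Stone's characterization of $\mathrm{dom}(D_i)$, the norm convergence $A_k \to \partial_i(X)$, and the strong convergence $u_k\xi \to \xi$, correctly yields $X\xi \in \mathrm{dom}(D_i)$ and $\partial_i(X)\xi = \ii(D_iX - XD_i)\xi$, with uniqueness of the bounded extension following from density of $\mathrm{dom}(D_i)$. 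Your displayed estimate handles the only delicate point (norm-convergent operators applied to a moving vector), and your identity and inequality both check out. The trade-off is the usual one: the paper's proof is shorter but outsources the analytic content to the literature, while yours makes the lemma independent of the reference at the cost of invoking Stone's theorem explicitly; this is essentially the argument the cited lemma encapsulates, so there is no mathematical divergence, only a difference in self-containment.
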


\begin{proof}
The first assertion is well-known.
See \cite[Lemma 2.4]{Sako} for example.
We calculate $\partial_i (\alpha_\kk(X))$ as follows:
\begin{eqnarray*}
&&
\lim_{k \to 0} \frac{\exp(\ii k D_i) \sigma(\kk) X \sigma(- \kk) \exp(- \ii k D_i) - \sigma(\kk) X \sigma(- \kk)}{k}\\
&=&
\sigma(\kk) \lim_{k \to 0} \frac{\exp(\ii k D_i) X \exp(- \ii k D_i) - X}{k} \sigma(- \kk)
=
\alpha_\kk (\partial_i(X)).
\end{eqnarray*}
\end{proof}

\begin{definition}
Let $X$ be an operator on $\Hil$ which is uniform with respect to $E$.
The operator $X$ is said to be {\rm smooth} with respect to $E$, 
if for every sequence $\{i(m)\}$ of $\{1, \cdots, d\}$, the higher order partial derivatives
\[\partial_{i(1)}(X), \quad
\partial_{i(2)} ( \partial_{i(1)}(X)), \quad
\partial_{i(3)} ( \partial_{i(2)} ( \partial_{i(1)}(X))), \quad
\cdots \]
exist and are uniform with respect to $E$.
\end{definition}

\begin{lemma}
\label{lemma: adjoint and multiplication}
Suppose that $X, Y \in \B(\Hil)$ are smooth with respect to $E$.
Then $X^*$ and $XY$ are also smooth and satisfy
\[\partial_i(X^*) = \partial_i(X)^*, \quad \partial_i(X Y) = \partial_i(X) Y + X \partial_i(Y)\]
for every $i \in \{1, \cdots, d\}$.
\end{lemma}

\begin{proof}
By equation (\ref{equation: partial derivative}),
we have the following equation of limits in norm topology:
\begin{eqnarray*}
\partial_i (X)^* 
&=& \left(\lim_{k \to 0} 
\frac{\exp(\ii k D_i) X \exp(- \ii k D_i) - X}{k} \right)^*\\
&=& \lim_{k \to 0} 
\left(\frac{\exp(\ii k D_i) X \exp(- \ii k D_i) - X}{k} \right)^*\\
&=& \lim_{k \to 0} \frac{\exp(\ii k D_i) X^* \exp(- \ii k D_i) - X^*}{k}\\
&=& \partial_i (X^*).
\end{eqnarray*}
Therefore every partial derivatives of $X^*$ exist.
Since $\partial_i (X)$ is uniform, its adjoint is also uniform.
Repeating this calculation of partial derivatives,
every higher derivatives of $X^*$ also exist and are uniform.
This means that $X^*$ is also smooth.

Since $Y$ is uniform, the equation
\[Y = \lim_{k \to 0} \exp(\ii k D_i) Y \exp(- \ii k D_i)\]
holds.
Combining with equation (\ref{equation: partial derivative}) for $X$ and for $Y$,
we have
\begin{eqnarray*}
&&
\partial_i(X) Y + X \partial_i(Y)
=
\partial_i(X) 
\left(\lim_{k \to 0} \exp(\ii k D_i) Y \exp(- \ii k D_i) \right) 
+ X \partial_i(Y)
\\
 & = &
\lim_{k \to 0} 
\frac{\exp(\ii k D_i) X  Y \exp(- \ii k D_i) - X\exp(\ii k D_i) Y \exp(- \ii k D_i)}{k} \\
&&+
\lim_{k \to 0} X \frac{\exp(\ii k D_i) Y \exp(- \ii k D_i) - Y}{k}\\
&=&
\lim_{k \to 0} 
\frac{\exp(\ii k D_i) X  Y \exp(- \ii k D_i) - XY}{k}\\
&=&
\partial_i(X Y).
\end{eqnarray*}
Therefore every partial derivatives of $XY$ exist.
Since $\partial_i(X), Y, X, \partial_i(Y)$ are uniform, 
$\partial_i(X Y)$ is also uniform.
Repeating this calculation of partial derivatives,
every higher derivatives of $X Y$ also exist and are uniform.
This means that $XY$ is also smooth.
\end{proof}

\begin{definition}\label{definition: analyticity}
Let $X$ be a bounded operator on $\Hil$.
The operator $X$ is said to be {\rm analytic} with respect to $E$, 
if the mapping $\kk \mapsto \alpha_\kk(X)$
can be extended to a holomorphic mapping defined on a neighborhood
\[\{(\kappa_1, \cdots, \kappa_d) \in \C^d \ |\ - \delta < \mathrm{Im}(\kappa_i) < \delta \mathrm{\ for\ } 1 \le i \le d\}\]
of $\R^d$,
where $\delta$ is some positive number.
\end{definition}

Note that every analytic operator is smooth.

\begin{definition}
A $d$-dimensional quantum walk $\left( \Hil, \left(U^t \right)_{t \in \Z}, E \right)$ is said to be {\rm uniform} {\rm (smooth, analytic)}, if $U$ is uniform (smooth, analytic) with respect to $E$.
\end{definition}

We note that analyticity on quantum walks is so weak that the class contains all the multi-dimensional quantum walks which have been studied.

\begin{example}
Let $j$ be in $\{1, \cdots, d\}$.
Let $S_j$ be the unitary operator on $\ell_2(\Z^d)$ given by the shift as in Example \ref{example: bilateral shift}.
The action $\alpha_\kk(S_j)$ of $\kk = (k_1, \cdots, k_d) \in \R^d$
is given by equation (\ref{equation: the action of kk}),
$\alpha_\kk(S_j) = \exp(\ii k_j) S_j$.
This action can be extended to $\kk = (k_1, \cdots, k_d) \in \C^d$ by $\kk \mapsto \exp(\ii k_j) S_j$. This map is holomorphic.
It follows that $S_j$ is analytic.
In particular the bilateral shift $S$ on $\ell(\Z)$ is analytic.
Therefore,
the most famous $1$-dimensional quantum walk 
presented in equation (\ref{equation: concrete QW}) 
is also analytic.
\end{example}

\subsection{Similarity between multi-dimensional QWs}
\label{subsection: similarity}

Let $\Hil_1$ and $\Hil_2$ be Hilbert spaces.
For $i = 1, 2$,
let $E_i$ be a Borel measure on $\R^d$ whose values are orthogonal projections in $\B(\Hil_i)$.
Define a new projection-valued measure $E_1 \oplus E_2$ by
\[E_1 \oplus E_2(\Omega) = E_1(\Omega) \oplus E_2(\Omega), \quad {\rm \ a \ Borel \ subset \ } \Omega \subset \R^d\]
A bounded operator $X \colon \Hil_1 \to \Hil_2$ is said to be {\it smooth} with respect to $E_1$ and $E_2$, if 
$
\left(
\begin{array}{cc}
0 & 0\\
X & 0
\end{array}
\right)
\colon  \Hil_1 \oplus \Hil_2 \to \Hil_1 \oplus \Hil_2
$
is smooth with respect to $E_1 \oplus E_2$.

\begin{definition}\label{definition: similarity}
Two $d$-dimensional quantum walks $(\Hil_1, (U_1^t)_{t \in \Z}, E_1)$, $(\Hil_2, (U_2^t)_{t \in \Z}, E_2)$ are said to be {\rm similar},
if there exists a unitary operator $V \colon \Hil_1 \to \Hil_2$ satisfying the following 
conditions
\footnote{In this definition, we require that $V$ is unitary.
This condition can be relaxed.
It is possible replace the smooth unitary intertwiner $V$ with a smooth {\it invertible} intertwiner.
These definitions are equivalent.
Because its proof is long, we omit the explanation.}
:
\begin{itemize}
\item
$V U_1 = U_2 V$, 
\item 
$V$ is smooth with respect to
$E_1$ and $E_2$.
\end{itemize}
\end{definition}

Since $V^{-1}$ is equal to $V^*$,
by Lemma \ref{lemma: adjoint and multiplication},
similarity between quantum walks is a reflexive relation.
To see similarity between quantum walks is transitive,
assume 
that $V_1$ is a smooth unitary intertwiner from a quantum walk $U_1$ to a quantum walk $U_2$, 
and that $V_2$ is a smooth unitary intertwiner from a $U_2$ to a quantum walk $U_3$.
Since we have 
\[V_2 V_1 U_1 = V_2 U_2 V_1 = U_3 V_2 V_1,\]
$V_2 V_1$ is a unitary intertwiner from $U_1$ to $U_3$.
Again by Lemma \ref{lemma: adjoint and multiplication},
the unitary $V_2 V_1$ is also smooth.
It follows that $U_1$ and $U_3$ are similar.
Therefore, similarity is a equivalence relation.
If two quantum walks are similar and if one of the walks is smooth, then the other walk is also smooth.

\begin{remark}
As in Proposition \ref{proposition: similar asymptotic behavior},
if two quantum walks are similar,
and if one of them has a limit distribution,
then the other walk also has a limit distribution.
Then the limit is the same as that of the other.
For the study of limit distributions of a quantum walk $U$,
we can replace it with another walk which is similar to $U$.
For a Hilbert space $\Hil$ with a $d$-dimensional coordinate system $E$,
we can modify the information of position in the following sense.
\end{remark}

\begin{example}
Let $X \subset \R^d$ is a discrete subset.
Consider the case that for every $x \in X_1$, a Hilbert space $\Hil_x$ is given.
Define a Hilbert space $\Hil_X$ by $\oplus_{x \in X} \Hil_x$.
Let $E_X$ be the spectral measure on $\R^d$ defined by
the following:
for $\Omega \subset \R^d$,
$E_X(\Omega)$ is the orthogonal projection from $\Hil_X$ to $\oplus_{x \in \Omega \cap X} \Hil_x$.

Let $f \colon X \to \R^n$ be a map satisfying that there exist a constant $0 < R$ such that
for every $x \in X$, the distance between $x$ and $f(x)$ is at most $R$.
The map $f$ stands for the modification of position.
Define $Y$ by the image $f(X)$.
For $y \in Y$, define $\mathcal{K}_y$ by $\oplus_{x \in f^{-1}(y)} \Hil_x$.
Define $\mathcal{K}_Y$ by $\oplus_{y \in Y} \mathcal{K}_y$.
Let $E_Y$ be the spectral measure on $\R^d$ defined by
the following:
for $\Omega \subset \R^d$,
$E_Y(\Omega)$ is the orthogonal projection 
from $\mathcal{K}_Y$ to $\oplus_{y \in \Omega \cap Y} \mathcal{K}_y$.

Define a unitary $V$ from $\Hil_X = \oplus_{x \in X} \Hil_x$ to
$\mathcal{K}_Y = \oplus_{y \in Y} \mathcal{K}_y 
= \oplus_{y \in Y} \oplus_{x \in f^{-1}(y)} \Hil_x$
by the direct sum of $\mathrm{id}_{\Hil_x} \colon \Hil_x \to \Hil_x$.
Then the unitary $V$ is smooth or more strongly analytic with respect to
$E_X$ and $E_Y$.

Let $U$ be a quantum walk acting on $(\Hil_X, E_X)$.
Then we also have a walk $V U V^{-1}$ acting on $(\mathcal{K}_Y, E_Y)$.
Since $V$ is analytic (therefore smooth and uniform), the walk $V U V^{-1}$ is analytic (smooth, or uniform), 
if and only if $U$ has the same regularity.
Proposition \ref{proposition: similar asymptotic behavior}
shows that asymptotic behavior of $V U V^{-1}$ is the same as $U$.

We can apply this example to every crystal lattice.
Let $d$ be $2$ or $3$.
Assume that $X \subset \R^{d}$ forms a crystal lattice.
More precisely, there exists an additive subgroup $G \subset \R^d$ which is isomorphic to $\Z^d$ such that for every $g \in G$, $g + X = X$.
We can chose some bounded fundamental domain $\Xi \subset \R^d$ 
of the additive action of $G$ on $\R^d$,
that is, the family $\{g + \Xi\}_{g \in G}$ are disjoint and 
its union is $\R^d$.
The intersection of $\Xi$ and $X$ stands for the unit of the crystal structure.
Choose a point $y_0$ in $\Xi$.
Define $Y \subset \R^d$ by $\{ g + y_0 \ | \ g \in G\}$.
Define $f \colon X \to Y$ as follows:
for every $x \in (g + \Xi) \cap X$, $f(x) = g + y_0$.
This map $f$ stands for the modification of position.

For a quantum walk $U$ on a Hilbert space associated to $X$,
we can consider a quantum walk $V U V^{-1}$ on a Hilbert space associated to $Y$ which is similar to $U$.
In this way, we reduce the study of quantum walks 
on arbitrary crystal lattice to those on the integer lattice $(G + y_0) \cong \Z^d$.
\qed
\end{example}

\section{General theory for asymptotic behavior of quantum walks}
\label{section: general theory of asymptotic behavior}

\subsection{Logarithmic derivatives and their asymptotic behavior}
\label{subsection: logarithmic derivatives}

\begin{definition}\label{definition: 1-cocycle}
Let $\left( U^t \right)_{t \in \Z}$ be a unitary representation of $\Z$ on a Hilbert space $\Hil$.
A two-sided sequence $\{c_t\}_{t \in \Z}$ of bounded operators on $\Hil$ is called a $1$-cocycle of $\left( U^t \right)_{t \in \Z}$, if for every $s, t \in \Z$,
$c_{s + t} = U^{-t} c_s U^t + c_t$.
\end{definition}

\begin{lemma}\label{lemma: bounded by c1}
If $\{c_t\}_{t \in \Z}$ is a $1$-cocycle of $\left( U^t \right)_{t \in \Z}$, then for every $t \ge 1$, $\| c_t \| \le t \|c_1\|$.
\end{lemma}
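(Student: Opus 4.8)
The plan is to prove the estimate by induction on $t \ge 0$, using nothing more than the defining cocycle identity and the fact that conjugation by a unitary preserves the operator norm. First I would record the base cases. Setting $s = t = 0$ in the relation $c_{s+t} = U^{-t} c_s U^t + c_t$ gives $c_0 = c_0 + c_0$, hence $c_0 = 0$, so the inequality holds with equality ($0 \le 0$) at $t = 0$; and at $t = 1$ the claim $\|c_1\| \le \|c_1\|$ is trivial. These anchor the induction.

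For the inductive step, assume $\|c_t\| \le t \, \|c_1\|$ for some $t \ge 1$. The key move is to specialize the cocycle identity so that it expresses $c_{t+1}$ in terms of $c_t$ and $c_1$: taking the two arguments to be $t$ and $1$ yields
\[
c_{t+1} = U^{-1} c_t \, U + c_1.
\]
Now I would apply the triangle inequality and use that $U$ is unitary, so $\| U^{-1} c_t \, U \| = \| c_t \|$. This gives
\[
\| c_{t+1} \| \le \| c_t \| + \| c_1 \| \le t \, \| c_1 \| + \| c_1 \| = (t+1) \, \| c_1 \|,
\]
which closes the induction and establishes the bound for all $t \ge 0$.

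There is no serious obstacle here; the estimate is essentially a telescoping consequence of the cocycle relation. The only point that requires a moment of care is bookkeeping in the cocycle identity—choosing which of the two slots receives the index $t$ and which receives $1$—together with the observation that the conjugation $X \mapsto U^{-1} X U$ is an isometry for the operator norm because $U$ is unitary. One could equally run the induction using the splitting $1 + t$ in place of $t + 1$, writing $c_{t+1} = U^{-t} c_1 \, U^t + c_t$; this leads to the same bound and makes the linear growth transparent.
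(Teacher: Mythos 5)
Your proof is correct and takes essentially the same route as the paper: the paper likewise applies the triangle inequality and unitary invariance of the norm to the cocycle identity, obtaining subadditivity $\|c_{s+t}\| \le \|c_s\| + \|c_t\|$, and then ``repeats the decomposition,'' which is exactly the induction you spell out. Your version merely makes the base case $c_0 = 0$ and the inductive step explicit, which is a fine (slightly more careful) presentation of the identical argument.
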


\begin{proof}
For every $s, t \in \Z$,
we have
\begin{eqnarray}
\label{equation: subadditive}
\|c_{s + t} \| \le \|U^{-t} c_s U^t\| + \|c_t\| = \|c_s\| + \|c_t\|.
\end{eqnarray}
Repeating this decomposition, we obtain $\|c_t\| \le t \|c_1\|$.
\end{proof}

\begin{lemma}
The limit $\lim_{t \to \infty} \frac{\|c_t\|}{t}$ exists and is less than $\| c_1 \|$.
\end{lemma}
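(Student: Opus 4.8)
The plan is to recognise this as Fekete's subadditive lemma applied to the sequence $a_t := \|c_t\|$, $t \ge 0$, and then read off the bound from the inequality already established in Lemma~\ref{lemma: bounded by c1}.

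First I would record subadditivity. Exactly as in the proof of Lemma~\ref{lemma: bounded by c1}, the cocycle relation $c_{s+t} = U^{-t} c_s U^t + c_t$ together with the triangle inequality and the unitarity of $U$ (so that $\|U^{-t} c_s U^t\| = \|c_s\|$) gives
\[
a_{s+t} \le a_s + a_t \qquad (s, t \ge 0).
\]
I would also note that $c_0 = 0$: substituting $s = t = 0$ into the cocycle relation yields $c_0 = c_0 + c_0$, so $a_0 = 0$; and of course $a_t \ge 0$ for every $t$, so the sequence is nonnegative.

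Next I would carry out the standard Fekete argument. Fix $m \ge 1$, and for $t \ge 1$ write $t = qm + r$ with $q \ge 0$ and $0 \le r < m$. Iterating subadditivity gives $a_t \le q\, a_m + a_r$, hence $a_t / t \le (q\, a_m)/t + a_r/t$. Letting $t \to \infty$ with $m$ held fixed, one has $q/t \to 1/m$, while $a_r$ takes values in the finite set $\{a_0, \dots, a_{m-1}\}$ and so $a_r/t \to 0$. Therefore $\limsup_{t \to \infty} a_t/t \le a_m/m$. Since $m$ was arbitrary, $\limsup_{t \to \infty} a_t/t \le \inf_{m \ge 1} a_m/m$, and as trivially $\inf_{m \ge 1} a_m/m \le \liminf_{t \to \infty} a_t/t$, the limit exists and equals $\inf_{m \ge 1} a_m/m$. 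Finally, taking $m = 1$ in the infimum gives $\lim_{t \to \infty} a_t/t \le a_1 = \|c_1\|$, which is the asserted inequality (understood, as in Lemma~\ref{lemma: bounded by c1}, in the non-strict sense $\le$).

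I do not expect any genuine obstacle here, since the whole argument is a routine subadditivity computation. The only points needing a moment's care are verifying $c_0 = 0$, which keeps the remainder terms $a_r$ bounded, and being explicit that unitarity makes the conjugation norm-preserving; both are immediate.
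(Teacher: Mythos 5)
Your proposal is correct and follows essentially the same route as the paper: the paper's proof simply observes that $\{\|c_t\|\}$ is subadditive (by the cocycle relation and unitarity, as in Lemma \ref{lemma: bounded by c1}) and then invokes Fekete's subadditive lemma, which you prove in full via the standard $t = qm + r$ decomposition. The only difference is that you supply the proof of Fekete's lemma rather than citing it, and you make explicit the non-strict reading of ``less than,'' both of which match the paper's intent.
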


\begin{proof}
By the inequality (\ref{equation: subadditive}), the sequence $\{\|c_t\|\}$ is subadditive.
For every subadditive sequence $\{\gamma_t\}_{t = 1}^\infty$ of real numbers, the sequence $\left\{ \gamma_t / t \right\}$ converges to its infimum.
\end{proof}

For the rest of this paper, the $1$-cocycle associated to the observable of position plays a key role.
Let $\left( \Hil, \left(U^t \right)_{t \in \Z}, E \right)$ be a $d$-dimensional smooth quantum walk.
Let $\ww = (w_i)$ be a vector in $\R^d$.
Denote by $D$ the self-adjoint operator
\[\sum_{i = 1}^d w_i D_i = \sum_{i = 1}^d \int_{(x_i) \in \R^d} w_i x_i E(d x_1 \cdots d x_d).\]
Since $U^t$ is smooth with respect to $E$, 
by
Lemma \ref{lemma: partial derivative and commutator},
the commutator of $U^t$ with $\ii D_i$ is bounded.
It follows that the operator
\begin{eqnarray}\label{equation: coboundary of position}
U^{-t} D U^t - D = - \ii \sum_{i = 1}^d U^{-t} w_i [\ii D_i, U^t]
\end{eqnarray}
uniquely defines a bounded operator $c_t$ on $\Hil$.
We note that the sequence $U^{-t} D U^t$ stands for the time evolution of the observable $D$ in the Heisenberg representation.
The two-sided sequence $\{c_t\}_{t \in \Z}$ is a $1$-cocycle of $\left( U^t \right)_{t \in \Z}$.
By
Lemma \ref{lemma: partial derivative and commutator}, and by equation
(\ref{equation: coboundary of position}),
the $1$-cocycle is equal to
\footnote{The equation is similar to the right hand side of the following formula in calculus: 
$(\log f(t))' = f(t)^{-1} f'(t)$.
This is the motivation of the definition of {\it the logarithmic derivative}.}
\begin{eqnarray}\label{equation: logarithmic derivative of a QW}
c_t = - \ii \sum_i w_i U^{-t} \partial_i(U^t).
\end{eqnarray}

\begin{definition}
The $1$-cocycle
$\left\{c_t = U^{-t} D U^t - D \right\}$
is called {\rm the logarithmic derivatives} of the quantum walk $\left( \Hil, \left(U^t \right)_{t \in \Z}, E \right)$ with respect to the operator $D = \sum_i w_i D_i$.
The operator
$\left\{c_t / t \right\}$
is called {\rm the average of logarithmic derivatives} of the quantum walk.
\end{definition}

In most cases, the norms of the logarithmic derivatives $\left\{\|c_t\| \right\}$ linearly increases.
Asymptotic behavior of
the average $\left\{c_t / t \right\}$ of the logarithmic derivatives is an important tool in the study of limit distribution of the quantum walk (Theorem \ref{theorem: limit distribution}).
The following proposition means that 
if two walks are similar, then averages of logarithmic derivatives are similar.

\begin{proposition} \label{proposition: similarity and ALD}
Let $(\Hil_1, (U_1^t)_{t \in \Z}, E_1)$
and $(\Hil_2, (U_2^t)_{t \in \Z}, E_2)$ be $d$-dimensional smooth quantum walks.
Let $D_i^{(1)}$ be the self-adjoint operator $\int_{\R^r} x_i E_1(d \xx)$.
Let $D_i^{(2)}$ be the self-adjoint operator $\int_{\R^r} x_i E_2(d \xx)$.
Let $V \colon \Hil_1 \to \Hil_2$ be a smooth unitary operator which gives similarity between $U_1$ and $U_2$.
For $i = 1, \cdots, d$, 
denote by $\partial_i^{(1)}$ the $i$-th partial derivative with respect to $E_1$.
and 
denote by $\partial_i^{(2)}$ the $i$-th partial derivative with respect to $E_2$.
Then for $i = 1, \cdots, d$, 
the sequence 
\[\left\{ 
\frac{U_2^{-t} \partial_i^{(2)} (U_2^t)}{\ii t}
-
V \frac{U_1^{-t} \partial_i^{(1)} (U_1^t)}{\ii t} V^{-1}
\right\}_{t}
\]
converges to $0$ in the norm topology.
\end{proposition}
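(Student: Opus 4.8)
The plan is to reduce the statement to a single algebraic identity coming from the intertwining relation $U_2^t = V U_1^t V^{-1}$ combined with the Leibniz rule, and then to observe that the operator difference in the numerator is in fact \emph{bounded} uniformly in $t$ (not merely $o(t)$); dividing by $\ii t$ then sends it to $0$ in norm. Throughout I write $\partial_i(V)\colon \Hil_1 \to \Hil_2$ for the cross-space derivative, i.e.\ the norm limit of $k^{-1}\left( \exp(\ii k D_i^{(2)}) V \exp(-\ii k D_i^{(1)}) - V \right)$, which is exactly the $\partial_i$ (with respect to $E_1 \oplus E_2$) of the corner operator $\left(\begin{smallmatrix} 0 & 0 \\ V & 0 \end{smallmatrix}\right)$; smoothness of $V$ guarantees it exists and is bounded. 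By reflexivity of similarity (Lemma \ref{lemma: inverse of smooth operator}) the operator $V^{-1}$ is smooth with respect to $E_2$ and $E_1$, so $\partial_i(V^{-1})\colon \Hil_2 \to \Hil_1$ likewise exists and is bounded, and smoothness of the walks makes $U_1^t$, $U_2^t$ smooth so that $\partial_i^{(1)}(U_1^t)$, $\partial_i^{(2)}(U_2^t)$ are bounded.

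Next I would set up the cross-space Leibniz expansion by embedding $V$, $V^{-1}$, $U_1^t$, $U_2^t$ into the appropriate corners of $\B(\Hil_1 \oplus \Hil_2)$ and applying the ordinary Leibniz rule for $\partial_i$ with respect to $E_1 \oplus E_2$ to the factorization $U_2^t = V U_1^t V^{-1}$. Reading off the relevant corner, this yields
\[
\partial_i^{(2)}(U_2^t) = \partial_i(V)\, U_1^t\, V^{-1} + V\, \partial_i^{(1)}(U_1^t)\, V^{-1} + V\, U_1^t\, \partial_i(V^{-1}).
\]
Multiplying on the left by $U_2^{-t} = V U_1^{-t} V^{-1}$ and cancelling using $V^{-1} V = \mathrm{id}_{\Hil_1}$ and $U_1^{-t} U_1^t = \mathrm{id}_{\Hil_1}$, the middle term collapses to $V\big(U_1^{-t}\partial_i^{(1)}(U_1^t)\big)V^{-1}$ and the last term to $V\,\partial_i(V^{-1})$, so that
\[
U_2^{-t}\partial_i^{(2)}(U_2^t) - V\big(U_1^{-t}\partial_i^{(1)}(U_1^t)\big)V^{-1} = V\, U_1^{-t}\big(V^{-1}\partial_i(V)\big)\, U_1^t\, V^{-1} + V\,\partial_i(V^{-1}).
\]

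Finally I would estimate the right-hand side. The second summand is a fixed bounded operator independent of $t$. In the first summand the bounded operator $V^{-1}\partial_i(V)\colon \Hil_1 \to \Hil_1$ is conjugated by the unitaries $U_1^{\pm t}$, hence its norm is unchanged, and the summand has norm at most $\|V\|\,\|V^{-1}\|\,\|V^{-1}\partial_i(V)\|$, again independent of $t$. Thus the whole difference is bounded uniformly in $t$, and dividing by $\ii t$ forces it to $0$ in the norm topology, which is the assertion. The step I would treat as the main obstacle is the bookkeeping behind the cross-space Leibniz rule and the existence and boundedness of $\partial_i(V^{-1})$; once these are in hand the estimate is immediate, the essential point being that the unitarity of $U_1^{\pm t}$ prevents the conjugated constant $V^{-1}\partial_i(V)$ from growing with $t$.
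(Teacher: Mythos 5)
Your proof is correct, and at its core it is the same argument as the paper's: substitute $U_2^{\pm t}=VU_1^{\pm t}V^{-1}$, observe that the only error terms involve a $t$-independent bounded operator coming from $V$, note that conjugation by the unitaries $U_1^{\pm t}$ leaves its norm unchanged, and divide by $t$. The difference is in the bookkeeping. The paper invokes Lemma \ref{lemma: partial derivative and commutator} to rewrite $\frac{1}{\ii t}U_2^{-t}\partial_i^{(2)}(U_2^t)$ as the closure of the cocycle $\frac{1}{t}\left(U_2^{-t}D_i^{(2)}U_2^t-D_i^{(2)}\right)$, and then replaces $V^{-1}D_i^{(2)}V$ by $D_i^{(1)}$ up to the bounded operator $D_i^{(2)}V-VD_i^{(1)}$, writing ``almost equal to'' for the resulting $O(1)$ discrepancy; you instead stay entirely inside the bounded operators, apply the cross-space Leibniz rule to $U_2^t=VU_1^tV^{-1}$, and obtain the exact identity
\[
U_2^{-t}\partial_i^{(2)}(U_2^t)-V\,U_1^{-t}\partial_i^{(1)}(U_1^t)\,V^{-1}
= V\,U_1^{-t}\bigl(V^{-1}\partial_i(V)\bigr)U_1^t\,V^{-1}+V\,\partial_i(V^{-1}).
\]
These are two faces of the same coin: by Lemma \ref{lemma: partial derivative and commutator} applied to $E_1\oplus E_2$, your $\partial_i(V)$ is precisely the closure of $\ii\bigl(D_i^{(2)}V-VD_i^{(1)}\bigr)$, which is the operator the paper uses. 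What your rendering buys is an explicit, rigorous error term --- no unbounded operators and no ``almost equal'' --- at the modest cost of justifying the cross-space Leibniz rule and the existence and boundedness of $\partial_i(V^{-1})=-V^{-1}\partial_i(V)V^{-1}$; the latter follows by transplanting the first-derivative computation in the proof of Lemma \ref{lemma: inverse of smooth operator} to the corner embedding, exactly as you indicate, and is the same fact the paper relies on when it asserts that similarity is a symmetric relation.
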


\begin{proof}
By Lemma \ref{lemma: partial derivative and commutator}, the operator $\frac{U_2^{-t} \partial_i^{(2)} (U_2^t)}{\ii t}$ is equal to the closure of 
\begin{eqnarray*}
\frac{V U_1^{-t} V^{-1} D_i^{(2)} V U_1^t V^{-1} - D_i^{(2)}}{t}.
\end{eqnarray*}
Since $V$ is smooth with respect to $E_1$ and $E_2$, 
by Lemma \ref{lemma: partial derivative and commutator} for $E_1 \oplus E_2$,
the operator $D_i^{(2)} V - V D_i^{(1)}$ is bounded.
This means that distance between $V^{-1} D_i^{(2)} V$ and $D_i^{(1)}$ is small.
If $t$ is large, the above operator is almost equal to
\begin{eqnarray*}
\frac{V U_1^{-t} D_i^{(1)} U_1^t V^{-1} - V D_i^{(1)} V^{-1}}{t}
=
\frac{V ( U_1^{-t} D_i^{(1)} U_1^t - D_i^{(1)}) V^{-1}}{t}.
\end{eqnarray*}
This is equal to the operator $V \frac{U_1^{-t} \partial_i^{(1)} (U_1^t)}{\ii t} V^{-1}$.
\end{proof}

\subsection{General theory for asymptotic behavior of the distribution $p_t$}
\label{subsection: asymptotic behavior of distributions}

For the quantum walk $\left( \Hil, \left(U^t \right)_{t \in \Z}, E \right)$, and for a unit vector $\xi \in \Hil$,
a sequence $\{p_t\}_{t \in \Z}$ of Borel probability measures on $\R^d$ is defined by
\begin{eqnarray}\label{equation: distribution of velocity}
p_t(\Omega) = \langle E(t \Omega) U^t \xi, U^t \xi\rangle, \quad \ {\rm for \ every \ Borel\ subset\ } \Omega \subset \R^d.
\end{eqnarray}
The unit vector $\xi \in \Hil$ is called an {\it initial vector}.
In many concrete examples of homogeneous quantum walks, the existence of the weak limit of $\{p_t\}$ has already been studied. See \cite{InuiKonishiKonno}, \cite{KonnoJMSJ}, \cite{WatabeKobayashiKatoriKonno}, for example.

\begin{example}
To see what the measure $p_t$ means, let us look at a quantum walk $U$ acting on $\ell_2(\Z^d) \otimes \C^n$.
Take the spectral measure $E$ as 
in Example \ref{example: standard spectral measure}.
Express $U^t \xi \in \ell_2(\Z^d) \otimes \C^n$ 
by $(\Psi_t(\xx))_{\xx \in \Z^d}$, where $\Psi_t(\xx)$ 
is a vector in $\C^n$.
By equation (\ref{equation: distribution of velocity}),
for every subset $\Omega$ of $\R^d$,
we have
\[p_t(\Omega) 
= \sum_{\xx \in t \Omega \cap \Z^d} \|\Psi_t(\xx)\|^2
= \sum_{\vv \in \Omega \cap \Z^d / t} \|\Psi_t(t \vv)\|^2.\]
Therefore,
the measure $p_t$ is a sum of scalar multiple of point masses
$\{\delta_\vv\}_{\vv \in \Z^d / t}$,
and the coefficient of $\delta_\vv$ is $\|\Psi_t(t \vv)\|^2$.
\qed
\end{example}

\begin{lemma}
\label{lemma: mean of a bounded Borel function wrt p_t}
Let $f$ be a bounded Borel function.
The mean of $f$ with respect to the measure $p_t$ is given by
\begin{eqnarray}
\label{equation: mean of a bounded Borel function wrt p_t}
\int_{\vv \in \R^d} f(\vv) p_t(d \vv) = 
\left\langle \int_{\xx \in \R^d} f \left( \frac{\xx}{t} \right) E(d \xx) U^t \xi, U^t \xi \right\rangle.
\end{eqnarray}
\end{lemma}

\begin{proof}
In the case that $f$ is a definition function of a Borel subset of $\R^d$,
the above equation holds, by the definition of $p_t$.
It follows that for every Borel step function $f$, the above equation holds.
For a general Borel function $f$, we have only to use a sequence $f_n$ of Borel step functions which are uniformly close to $f$.
\end{proof}

\begin{definition}
A vector $\xi$ in $\Hil$ is said to be smooth, if 
it is in the domain of $D_{i(m)} D_{i(m -1)} \cdots D_{i(1)}$ for every natural number $m$ and for every sequence $\{i(1)$, $\cdots$, $i(m)\}$ of $\{1, \cdots, d\}$.
\end{definition}

If the quantum walk $\left( \Hil, \left(U^t \right)_{t \in \Z}, E \right)$ is smooth and if the initial unit vector $\xi$ is smooth, then $U^t \xi$ is also smooth. The proof is given by Lemma \ref{lemma: partial derivative and commutator} and by the Leibniz rule:
$D_i(U^t \xi) = -\ii \partial_i(U^t) \xi + U^t (D_i \xi)$.

\begin{lemma}
\label{lemma: mean of a polynomial function wrt p_t}
Suppose that the quantum walk $\left( \Hil, \left(U^t \right)_{t \in \Z}, E \right)$ is smooth and that the vector $\xi$ is smooth.
Let $t$ be an arbitrary natural number.
Then every polynomial function $g$ on $\R^d$ is integrable with respect to $p_t$,
and the integral is given by
\[\int_{\vv \in \R^d} g(\vv) p_t(d \vv) = 
\left\langle \int_{\xx \in \R^d} g \left( \frac{\xx}{t} \right) E(d \xx) U^t \xi, U^t \xi \right\rangle.
\]
\end{lemma}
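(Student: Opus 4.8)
The plan is to realize $p_t$ as the image of a single scalar spectral measure and thereby reduce both claims to the finiteness of moments, which smoothness supplies. Writing $\eta = U^t\xi$, I would introduce the scalar Borel measure $\mu(\Omega) = \langle E(\Omega)\eta, \eta\rangle$ on $\R^d$; since $U^t$ is unitary and $\xi$ is a unit vector, $\mu$ is a probability measure. The defining relation $p_t(\Omega) = \langle E(t\Omega)\eta, \eta\rangle = \mu(t\Omega)$ exhibits $p_t$ as the pushforward of $\mu$ under the dilation $\xx \mapsto \xx/t$, so that $\int_{\vv} h(\vv) p_t(d\vv) = \int_{\xx} h(\xx/t)\mu(d\xx)$ for any Borel $h$ whenever either side is defined. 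Everything then comes down to controlling $\mu$-integrals of $g(\xx/t)$.

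By linearity it suffices to treat one monomial $g(\vv) = v_1^{n_1}\cdots v_d^{n_d}$. First I would invoke the remark preceding the lemma, which guarantees that $\eta = U^t\xi$ is again smooth; hence for each coordinate $i$ and each $m \in \N$ the vector $\eta$ lies in $\mathrm{dom}(D_i^m)$, and by the spectral theorem for $D_i$ this is equivalent to $\int_{\xx} |x_i|^{2m}\mu(d\xx) < \infty$. Since $|g(\xx/t)|^2$ is a constant multiple of $\prod_i |x_i|^{2 n_i}$, H\"older's inequality with the $d$ exponents all equal to $d$ bounds $\int_{\xx} \prod_i |x_i|^{2 n_i}\mu(d\xx)$ by $\prod_i \left( \int_{\xx} |x_i|^{2 n_i d}\mu(d\xx) \right)^{1/d}$, which is finite by the single-coordinate moment bounds. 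Thus $\int_{\xx} |g(\xx/t)|^2 \mu(d\xx) < \infty$, and because $\mu$ is finite, Cauchy--Schwarz gives $\int_{\xx} |g(\xx/t)|\mu(d\xx) < \infty$; the pushforward identity then shows $g$ to be $p_t$-integrable.

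To obtain the formula I would note that the square-integrability just established places $\eta$ in the spectral domain of $\int_{\xx} g(\xx/t) E(d\xx)$, so the spectral theorem yields $\langle \int_{\xx} g(\xx/t) E(d\xx)\,\eta, \eta\rangle = \int_{\xx} g(\xx/t)\mu(d\xx)$. Composing this with the change-of-variables identity of the first paragraph gives the asserted equation for the monomial $g$, and summing over the monomials appearing in an arbitrary polynomial completes the argument.

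The step I expect to be the main obstacle is the integrability bookkeeping in the second paragraph: one must pass from the hypothesis, which is phrased through the iterated operators $D_{i(m)}\cdots D_{i(1)}$, to the single scalar condition $\int_{\xx} |g(\xx/t)|^2\mu(d\xx) < \infty$ placing $\eta$ in the spectral domain of the polynomial operator. Routing this through single-coordinate moments together with H\"older sidesteps any delicate comparison between iterated operator domains and the joint spectral domain; once it is done, the change of variables and the spectral-theorem identity are entirely routine.
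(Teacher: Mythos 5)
Your proof is correct, but it takes a genuinely different route from the paper's. The paper also reduces to a single monomial $x_{i(m)}\cdots x_{i(1)}$ and identifies $\int_{\xx} g(\xx/t)\,E(d\xx)$ with the product $\frac{D_{i(m)}}{t}\cdots\frac{D_{i(1)}}{t}$, whose domain contains $U^t\xi$ by smoothness; but instead of passing to the scalar measure $\mu(\cdot)=\langle E(\cdot)U^t\xi,U^t\xi\rangle$, it approximates the polynomial $g$ uniformly by a countable sum $f$ of scalar multiples of indicator functions, observes that $\int g(\xx/t)E(d\xx)-\int f(\xx/t)E(d\xx)$ is a bounded operator of small norm, verifies the identity for $f$ (in effect extending the preceding lemma on bounded Borel functions to countably-valued step functions), and then passes to the limit. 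Your argument replaces this approximation scheme by direct spectral-measure bookkeeping: the pushforward identity $p_t=\phi_*\mu$ with $\phi(\xx)=\xx/t$, the spectral characterization $\mathrm{dom}\left(\int h\,dE\right)=\{\eta : \int |h|^2 d\mu_\eta<\infty\}$ together with $\langle \left(\int h\,dE\right)\eta,\eta\rangle=\int h\,d\mu_\eta$, and the generalized H\"older inequality (exponents all equal to $d$) to convert the single-coordinate moment bounds $\int |x_i|^{2m}d\mu<\infty$ supplied by smoothness into joint moment bounds. What your route buys: the integrability assertion of the lemma is proved explicitly rather than left implicit in an approximation, and you use only the ``diagonal'' part of smoothness ($\eta\in\mathrm{dom}(D_i^m)$ for each fixed $i$) rather than the mixed iterated domains; what the paper's route buys: it stays at the operator level and recycles the mechanism of Lemma \ref{lemma: mean of a bounded Borel function wrt p_t}, with no moment estimates at all. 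One point you should make explicit when writing this up: identifying the domain of the operator composition $D_i^m$ with the spectral domain $\{\eta:\int |x_i|^{2m}d\mu_\eta<\infty\}$ uses that $\mu_\eta$ is a finite measure, so that lower-order moments are dominated by the top one ($|x_i|^{2j}\le 1+|x_i|^{2m}$ for $j\le m$); this is standard, but it is the one spot where domains of compositions and of multiplication operators could be conflated without justification.
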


\begin{proof}
We may assume that $g(x_1, x_2 \cdots, x_d)$ is of the form $x_{i(m)} x_{i(m -1)} \cdots x_{i(1)}$.
In this case, we have
\[
\int_{\xx \in \R^d} g \left( \frac{\xx}{t} \right) E(d \xx)
=
\frac{D_{i(m)}}{t} \cdots \frac{D_{i(1)}}{t}.
\]
Since $U$ and $\xi$ is smooth with respect to $E$,
$U^t \xi$ is in the domain of the above self-adjoint operator.
Therefore the right hand side is well-defined.
There exists a countable sum $f(\vv)$ of scalar multiples of definition functions which is Borel and uniformly close to $g(\vv)$.
Then the difference between
\[
\int_{\xx \in \R^d} g \left( \frac{\xx}{t} \right) E(d \xx)
\sim
\int_{\xx \in \R^d} f \left( \frac{\xx}{t} \right) E(d \xx),
\]
is a bounded operator with small operator norm.
Therefore $U^t \xi$ is in the domain of $\int_{\xx \in \R^d} f \left( \frac{\xx}{t} \right) E(d \xx)$.
For such $f$, it is easy to show that
\[\int_{\vv \in \R^d} f(\vv) p_t(d \vv) = 
\left\langle \int_{\xx \in \R^d} f \left( \frac{\xx}{t} \right) E(d \xx) U^t \xi, U^t \xi \right\rangle.
\]
Since $f$ is uniformly close to $g$, the integral
$\int_{\vv \in \R^d} f(\vv) p_t(d \vv)$ is close to $\int_{\vv \in \R^d}$ $g(\vv)$ $p_t(d \vv)$.
\end{proof}

\begin{theorem}
\label{theorem: asymptotic concentration}
Suppose that the quantum walk $\left( \Hil, \left(U^t \right)_{t \in \Z}, E \right)$ is smooth and that the initial unit vector $\xi$ is smooth.
There exists a bounded subset $K$ of $\R^d$ such that $\lim_{t \to \infty} p_t(K) = 1$.
\end{theorem}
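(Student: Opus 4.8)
The plan is to control, one coordinate direction at a time, \emph{all} even moments of the scaled position and to show that the $2n$-th moment grows no faster than $M_i^{2n}$; a Chebyshev estimate followed by letting $n \to \infty$ then traps the mass inside a fixed box. Fix $i \in \{1, \cdots, d\}$ and set $M_i = \|\partial_i(U)\|$. Applying Lemma~\ref{lemma: mean of a polynomial function wrt p_t} to the polynomial $g(\vv) = v_i^{2n}$ gives
\[\int_{\vv \in \R^d} v_i^{2n}\, p_t(d\vv) = \left\langle \left(\frac{D_i}{t}\right)^{2n} U^t \xi,\, U^t \xi \right\rangle = \frac{\|D_i^n U^t \xi\|^2}{t^{2n}},\]
using self-adjointness of $D_i$ and the fact that $U^t \xi$ is smooth, so the problem reduces to estimating $\|D_i^n U^t \xi\|$. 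The target estimate is $\limsup_{t \to \infty} t^{-n}\|D_i^n U^t \xi\| \le M_i^n$, which yields $\limsup_{t \to \infty}\int v_i^{2n}\, p_t(d\vv) \le M_i^{2n}$ for every $n$.

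For the key estimate I would differentiate the identity $\exp(\ii k D_i) U^t \xi = \alpha_{k \ee_i}(U^t)\exp(\ii k D_i)\xi$, where $\{\ee_i\}$ is the standard basis of $\R^d$ and the identity is immediate from $\alpha_{k\ee_i}(U^t) = \exp(\ii k D_i) U^t \exp(-\ii k D_i)$. Since $U^t$ is smooth, $k \mapsto \alpha_{k\ee_i}(U^t)$ is norm-$C^\infty$ with $m$-th derivative $\alpha_{k\ee_i}(\partial_i^m(U^t))$, where $\partial_i^m$ denotes the $m$-fold partial derivative; since $\xi$ is smooth, $k \mapsto \exp(\ii kD_i)\xi$ is strongly $C^\infty$ with $m$-th derivative $(\ii D_i)^m\exp(\ii kD_i)\xi$. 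Taking the $n$-th derivative at $k = 0$ by the Leibniz rule produces
\[D_i^n U^t \xi = \sum_{m = 0}^n \binom{n}{m}\ii^{-m}\, \partial_i^m(U^t)\, D_i^{n-m}\xi.\]
It then remains to bound $\|\partial_i^m(U^t)\|$. Expanding $\partial_i^m(U^t) = \partial_i^m(U \cdots U)$ by the multinomial form of the Leibniz rule gives
\[\|\partial_i^m(U^t)\| \le \sum_{m_1 + \cdots + m_t = m}\frac{m!}{m_1! \cdots m_t!}\prod_{j = 1}^t \|\partial_i^{m_j}(U)\|,\]
and in this sum only the composition with all parts equal to $1$ contributes at order $t^m$, namely $\binom{t}{m}m!\,M_i^m \sim t^m M_i^m$; every other composition has strictly fewer nonzero parts and contributes $O(t^{m-1})$. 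Hence $\limsup_{t \to \infty} t^{-m}\|\partial_i^m(U^t)\| \le M_i^m$ (the case $m=1$ is just Lemma~\ref{lemma: bounded by c1}). Dividing the displayed expansion by $t^n$, each term with $m < n$ carries a factor $t^{-(n-m)}\|D_i^{n-m}\xi\| \to 0$, whereas the $m = n$ term contributes at most $M_i^n\|\xi\| = M_i^n$ in the limit, giving the target bound.

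Finally, for any $R > M_i$ the Chebyshev inequality gives $\limsup_{t\to\infty} p_t(\{\vv : |v_i| > R\}) \le \limsup_{t\to\infty} R^{-2n}\int v_i^{2n}\, p_t(d\vv) \le (M_i/R)^{2n}$ for every $n$; since $M_i/R < 1$, letting $n \to \infty$ forces $\lim_{t\to\infty} p_t(\{|v_i| > R\}) = 0$. Choosing $R_i > M_i$ for each $i$ and setting $K = \prod_{i=1}^d [-R_i, R_i]$, a union bound yields $p_t(\R^d \setminus K) \le \sum_{i} p_t(\{|v_i| > R_i\}) \to 0$, so $\lim_{t\to\infty} p_t(K) = 1$. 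I expect the main obstacle to be the growth estimate $\|\partial_i^m(U^t)\| = O(t^m)$ with the sharp leading constant $M_i^m$: one must check that the top-order-in-$t$ contribution in the multinomial expansion comes solely from the all-ones composition, and separately justify the term-by-term differentiation of the mixed product of a norm-differentiable operator and a strongly differentiable vector. Note that only the first derivative $\|\partial_i(U)\|$ enters the final constant, so smoothness (without analyticity) is enough.
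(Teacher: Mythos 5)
Your proof is correct, and its skeleton coincides with the paper's: reduce to coordinate-wise moment estimates via Lemma \ref{lemma: mean of a polynomial function wrt p_t}, obtain $\limsup_t \int v_i^{2n}\, p_t(d\vv) \le \|[D_i,U]\|^{2n}$, and finish with a Chebyshev bound, letting $n \to \infty$, and a union bound over the coordinates to trap the mass in a box. The difference lies in where the key estimate comes from. The paper does not prove it at all: it observes that $(\Hil, (U^t)_{t\in\Z}, D_i)$ is a one-dimensional quantum walk in the sense of the author's earlier paper and cites that reference for $\limsup_t \left| \left\langle t^{-m} D_i^m U^t\xi, U^t\xi \right\rangle \right| \le \|[D_i,U]\|^m$. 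You instead prove the (even-moment) estimate from scratch: differentiating $\exp(\ii k D_i)U^t\xi = \alpha_{k\ee_i}(U^t)\exp(\ii k D_i)\xi$ at $k=0$ to get the expansion of $D_i^n U^t\xi$, and then controlling $\|\partial_i^m(U^t)\|$ by the noncommutative multinomial Leibniz expansion, where only the all-ones compositions contribute at order $t^m$, with sharp constant $M_i^m = \|\partial_i(U)\|^m$; note that this constant agrees with the paper's, since $\partial_i(U)$ is the bounded extension of $[\ii D_i, U]$ by Lemma \ref{lemma: partial derivative and commutator}. Your route buys a self-contained argument that makes visible exactly what is used (smoothness of $U$ and $\xi$, and only the first-derivative norm in the final constant), essentially reproving the one-dimensional input inline; the paper's route is shorter because it defers that work to the earlier one-dimensional theory. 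The two auxiliary facts you flag --- the mixed Leibniz rule for a norm-differentiable operator function times a strongly differentiable vector function, and membership of $U^t\xi$ in $\mathrm{dom}(D_i^n)$ --- are standard, and the latter is already recorded in the paper in the remark preceding Lemma \ref{lemma: mean of a polynomial function wrt p_t}.
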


\begin{proof}
Let $i$ be an arbitrary element of $\{1, \cdots, d\}$.
Let $m$ be an arbitrary natural number.
Note that the equation
\begin{eqnarray*}
\int_{(v_1, \cdots, v_d) \in \R^d} v_i^m p_t(d v_1 \cdots d v_i \cdots v_d)
=
\left\langle \frac{D_i^m}{t^m} U^t \xi, U^t \xi \right\rangle
\end{eqnarray*}
holds, by Lemma \ref{lemma: mean of a polynomial function wrt p_t}.

The triplet $(\Hil, \left( U^t \right)_{t \in \Z}, D_i)$ is a one-dimensional quantum walk defined in \cite[Definition 2.1]{Sako}. Therefore we can employ the theory of one-dimensional quantum walks developed in \cite{Sako}.
As proved in \cite[Definition 2.27]{Sako}, for every $m \in \N$,
\begin{eqnarray*}
\limsup_t \left| \left\langle \frac{D_i^m}{t^m} U^t \xi, U^t \xi \right\rangle\right|
 \le \|[D_i, U]\|^m.
\end{eqnarray*}
Take and fix a positive number $L_i$ larger than $\|[D_i, U]\|$.
The inequality 
\begin{eqnarray*}
\limsup_t \left| \int_{(v_1, \cdots, v_d) \in \R^d} v_i^m p_t(d v_1 \cdots d v_i \cdots v_d) \right|
 \le \|[D_i, U]\|^m.
\end{eqnarray*}
implies that
$\lim_t p_t(\R^{i - 1} \times [- L_i, L_i] \times \R^{d - i}) = 1$.
We conclude that $\lim_t p_t([- L_1, L_1] \times \cdots \times [- L_d, L_d]) = 1$.
\end{proof}

\begin{corollary}
\label{corollary: limit distribution}
Suppose that the quantum walk $\left( \Hil, \left(U^t \right)_{t \in \Z}, E \right)$ is smooth and that the initial unit vector $\xi$ is smooth.
Then the following conditions are equivalent:
\begin{enumerate}
\item
(Convergence in Law).
There exists a Borel probability measure $p_\infty$ on $\R_d$ such that for every polynomial function $g$ on $\R^d$
\[\lim_t  \int_{\vv \in \R^d} g(\vv) p_t(d \vv) = 
\int_{\vv \in \R^d} g(\vv) p_\infty(d \vv).\]
\item
(Weak convergence).
There exists a Borel probability measure $p_\infty$ on $\R_d$ such that for every bounded continuous function $f$ on $\R^d$
\[\lim_t  \int_{\vv \in \R^d} f(\vv) p_t(d \vv) = 
\int_{\vv \in \R^d} f(\vv) p_\infty(d \vv).\]
\end{enumerate}
If the above conditions hold, then these limit distributions coincide and their support is compact.
\end{corollary}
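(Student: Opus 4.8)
The plan is to treat Theorem~\ref{theorem: asymptotic concentration} as the source of two things at once: tightness of the family $\{p_t\}$, and the quantitative moment bound buried in its proof. Recall that the proof establishes, via the one-dimensional theory of \cite{Sako}, that for each $i \in \{1, \cdots, d\}$ and each $m \in \N$ one has $\limsup_t \left| \int_{\R^d} v_i^m\, p_t(d\vv) \right| \le \|[D_i, U]\|^m$. Fixing $L_i > \|[D_i, U]\|$, this says that the coordinate moments of $p_t$ are eventually dominated by $L_i^m$. The consequence I want to extract is \emph{uniform integrability}: since $\int_{\R^d} \|\vv\|_\infty^{2N}\, p_t(d\vv) \le \sum_i \int_{\R^d} v_i^{2N}\, p_t(d\vv)$ stays bounded for large $t$, a Chebyshev-type estimate gives $\sup_{t \ge T} \int_{\|\vv\|_\infty \ge R} |g(\vv)|\, p_t(d\vv) \to 0$ as $R \to \infty$ for every polynomial $g$. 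This uniform control of polynomial tails is the device that bridges bounded continuous test functions and unbounded polynomial test functions.

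For the implication $(1) \Rightarrow (2)$ I would first identify the candidate limit. Passing the moment bound to the limit in $(1)$ yields $\int_{\R^d} v_i^{2N}\, p_\infty(d\vv) \le L_i^{2N}$, and letting $N \to \infty$ in Chebyshev's inequality forces $p_\infty$ to be supported on the compact box $\prod_i [-L_i, L_i]$. Since $p_t(K) \to 1$ for the compact $K$ of Theorem~\ref{theorem: asymptotic concentration}, the family $\{p_t\}$ is tight, so by Prokhorov's theorem every subsequence has a further subsequence $\{p_{t_k}\}$ converging weakly to some Borel probability measure $\mu$; by the portmanteau theorem $\mu(K) = 1$, so $\mu$ is compactly supported. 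Using a truncation $g\varphi_R$, where $\varphi_R$ is continuous, equal to $1$ on $[-R,R]^d$ and vanishing outside a slightly larger box, the uniform integrability above lets me pass to the limit: $\int g\, p_{t_k}(d\vv) \to \int g\varphi_R\, \mu(d\vv) \to \int g\, \mu(d\vv)$. Comparing with the hypothesis of $(1)$, the measures $\mu$ and $p_\infty$ have identical moments, and both are compactly supported, hence determinate, so they coincide. As every weak subsequential limit equals $p_\infty$ and the family is tight, $p_t \to p_\infty$ weakly, which is $(2)$.

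For the implication $(2) \Rightarrow (1)$, tightness and the portmanteau theorem again give $p_\infty(K) = 1$, so $p_\infty$ is compactly supported. For a polynomial $g$, the same truncation applies directly: $\int g\, p_t(d\vv)$ differs from $\int g\varphi_R\, p_t(d\vv)$ by a tail uniformly small in large $t$; the latter converges to $\int g\varphi_R\, p_\infty(d\vv)$ by weak convergence; and for $R$ large this equals $\int g\, p_\infty(d\vv)$ since $p_\infty$ lives on $K$. Interchanging the limits $t \to \infty$ and $R \to \infty$, which is legitimate precisely because of the uniform integrability, yields convergence of all polynomial moments, which is $(1)$. The final assertion on compact support of the common limit distribution has already been obtained along the way.

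The main obstacle is exactly the unboundedness of polynomials: weak convergence constrains only bounded continuous functions, whereas convergence in law constrains moments, and in general mass escaping to infinity would make the two notions inequivalent. What removes the obstacle is the uniform moment bound inherited from \cite{Sako} through the proof of Theorem~\ref{theorem: asymptotic concentration}. The genuine work therefore lies in two routine but essential steps: deriving uniform integrability of every polynomial from the coordinate moment bounds, and justifying the double limit in the truncation argument; once these are in place, Prokhorov's theorem and determinacy of compactly supported measures by their moments close the argument.
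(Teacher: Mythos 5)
Your proof is correct, and it departs from the paper's own argument in an instructive way. The paper's proof is a two-line sketch: $p_t$ concentrates on the compact set $K$ of Theorem \ref{theorem: asymptotic concentration}, and on $K$ polynomials and bounded continuous functions approximate each other uniformly (Stone--Weierstrass in one direction, truncation in the other). Your $(2) \Rightarrow (1)$ direction is essentially the rigorous completion of that sketch: the truncation $g\varphi_R$ is exactly a bounded continuous function agreeing with $g$ on $K$, and your uniform-integrability estimate supplies the tail control that the paper's sketch silently omits --- a genuine omission, since the statement of Theorem \ref{theorem: asymptotic concentration} alone (namely $p_t(K) \to 1$) does not prevent a vanishing amount of mass from escaping to infinity fast enough to destroy moment convergence; one really does need the bound $\limsup_t \left| \int_{\R^d} v_i^m\, p_t(d\vv) \right| \le \|[D_i, U]\|^m$ from the \emph{proof} of that theorem, as you correctly identified. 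Your $(1) \Rightarrow (2)$ direction, however, takes a genuinely different route: the paper would again argue by direct uniform approximation of a bounded continuous $f$ by a polynomial on $K$, whereas you invoke Prokhorov's theorem, the portmanteau theorem, and moment determinacy of compactly supported measures to identify every subsequential weak limit with $p_\infty$. Both routes ultimately rest on Stone--Weierstrass (yours, inside the determinacy claim); the paper's is shorter but incomplete as written, while yours is heavier but self-contained, and it isolates the one nontrivial analytic input --- the uniform polynomial tail bound inherited from the moment estimates --- that makes the equivalence of the two modes of convergence true at all.
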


\begin{proof}
Let $K$ be a compact subset of $\R^d$ 
in Theorem \ref{theorem: asymptotic concentration}.
Let $\epsilon$ be an arbitrary positive real number.
For every polynomial function $g$ on $\R^d$, 
there exists a bounded continuous function $f$ on $\R^d$ such that $|f(\xx) - g(\xx)| < \epsilon$ for arbitrary $\xx \in K$.
For every bounded continuous function $f$ on $\R^d$, 
there exists a polynomial function $g$ on $\R^d$ such that $|g(\xx) - f(\xx)| < \epsilon$ for arbitrary $\xx \in K$ by the theorem of Stone-Weierstrass.
\end{proof}

\begin{proposition}
\label{proposition: similar asymptotic behavior}
Let
$(\Hil_1, (U_1^t)_{t \in \Z}, E_1)$, $(\Hil_2, (U_2^t)_{t \in \Z}, E_2)$ be
two $d$-dimensional smooth quantum walks.
Suppose that there exists a smooth unitary operator $V \colon \Hil_1 \to \Hil_2$ which intertwines $U_1$ and $U_2$.
Let $\xi$ be a unit vector in $\Hil_1$ which is smooth with respect to $E_1$.
Let $p_t^{(1)}$ be the sequence of probability measures on $\R^d$ given by $(\Hil_1, (U_1^t)_{t \in \Z}, E_1)$ and $\xi$.
Let $p_t^{(2)}$ be the sequence of probability measures on $\R^d$ given by $(\Hil_2, (U_2^t)_{t \in \Z}, E_2)$ and $V \xi$.
If $p_t^{(1)}$ converges in law, then $p_t^{(2)}$ also converges in law.
Furthermore, these limit distributions coincide.
\end{proposition}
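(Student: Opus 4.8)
The plan is to reduce convergence in law to convergence of all polynomial moments via Corollary \ref{corollary: limit distribution}, and then to compare the moments of $p_t^{(2)}$ with those of $p_t^{(1)}$ directly, showing that their difference tends to $0$. By linearity it suffices to treat a single monomial $g(\vv) = v_{i(m)} \cdots v_{i(1)}$. First I would record that $V\xi$ is smooth with respect to $E_2$, so that the moment formula of Lemma \ref{lemma: mean of a polynomial function wrt p_t} is available for the second walk. Since $V$ is smooth with respect to $E_1, E_2$, the operator $B_i := D_i^{(2)} V - V D_i^{(1)}$ is bounded (this is exactly the fact used in the proof of Proposition \ref{proposition: similarity and ALD}), and the Leibniz-type identity $D_i^{(2)}(V\eta) = B_i \eta + V D_i^{(1)} \eta$ together with the corresponding boundedness of the higher commutators of $V$ propagates smoothness of $\xi$ through every finite product $D_{i(m)}^{(2)} \cdots D_{i(1)}^{(2)}$ by induction on $m$. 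With this in place, Lemma \ref{lemma: mean of a polynomial function wrt p_t} gives $\int g\, dp_t^{(j)} = \langle M_t^{(j)} U_j^t \eta_j, U_j^t \eta_j \rangle$ with $M_t^{(j)} = \frac{D_{i(m)}^{(j)}}{t} \cdots \frac{D_{i(1)}^{(j)}}{t}$, where $\eta_1 = \xi$ and $\eta_2 = V\xi$.

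Next I would conjugate and telescope. Using $U_2^t V = V U_1^t$ and the unitarity of $V$, I rewrite $\int g\, dp_t^{(2)} = \langle (V^{-1} M_t^{(2)} V)\, U_1^t \xi, U_1^t \xi \rangle$. Setting $C_i := V^{-1} B_i$, which is bounded, one has $V^{-1} D_i^{(2)} V = D_i^{(1)} + C_i$, so that $V^{-1} M_t^{(2)} V = \frac{D_{i(m)}^{(1)} + C_{i(m)}}{t} \cdots \frac{D_{i(1)}^{(1)} + C_{i(1)}}{t}$, whose leading term (all $D^{(1)}$'s) is precisely $M_t^{(1)}$. Telescoping the product difference factor by factor expresses $\int g\, dp_t^{(2)} - \int g\, dp_t^{(1)}$ as a finite sum over $k = 1, \dots, m$ of terms of the form $\langle \bigl(\prod_{l>k} \tfrac{D^{(1)}_{i(l)} + C_{i(l)}}{t}\bigr)\, \tfrac{C_{i(k)}}{t}\, \bigl(\prod_{l<k} \tfrac{D^{(1)}_{i(l)}}{t}\bigr) U_1^t \xi,\ U_1^t \xi \rangle$, each carrying one isolated bounded factor $\tfrac{C_{i(k)}}{t}$.

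To estimate the $k$-th term I would move the left-hand factor across the inner product by taking adjoints; since $V$ is unitary and each $D_i^{(j)}$ is self-adjoint, the left factor stays in the same product form. Cauchy--Schwarz then bounds the term by $\frac{\|C_{i(k)}\|}{t}$ times two vector norms, each of which is the square root of a product-of-squares moment $\int \prod_l v_{i(l)}^2\, dp_t^{(j)}$. These are bounded uniformly in $t$: for $p_t^{(1)}$ because it converges in law and hence all its moments converge (Corollary \ref{corollary: limit distribution}); for $p_t^{(2)}$ because Theorem \ref{theorem: asymptotic concentration} and its pure-power moment bounds $\limsup_t \int v_i^{2n}\, dp_t^{(2)} \le \|[D_i^{(2)}, U_2]\|^{2n}$, combined with H\"older's inequality, control every mixed even moment uniformly in $t$ even though $p_t^{(2)}$ is not yet known to converge. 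Consequently each telescoped term is $O(1/t)$, the finite sum tends to $0$, and $\int g\, dp_t^{(2)} - \int g\, dp_t^{(1)} \to 0$. Since the right-hand moments converge, so do the left-hand ones, and Corollary \ref{corollary: limit distribution} yields convergence in law of $p_t^{(2)}$ to the same limit distribution.

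The main obstacle is bookkeeping with the unbounded position operators: one must guarantee that every operator product acts only on vectors lying in the appropriate domain, which is why the smoothness of $V\xi$ (and of $U_j^t \eta_j$) is indispensable, and one must extract genuine decay from the perturbation. The decisive quantitative point is that, although a single $\frac{D}{t}$ is unbounded, the vectors $\prod \frac{D}{t}\, U^t \eta$ have norms equal to square roots of honest moments and are therefore bounded in $t$, so that the single isolated factor $\frac{C_{i(k)}}{t}$ supplies the needed $1/t$. I expect the one genuinely delicate step to be justifying the uniform-in-$t$ boundedness of the moments of $p_t^{(2)}$ in advance of knowing its convergence, which is exactly what Theorem \ref{theorem: asymptotic concentration} provides.
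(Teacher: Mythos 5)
Your proof is correct, but note that the paper itself offers no self-contained argument for this proposition: its entire proof is the single sentence that the proof is ``substantially the same as'' that of \cite[Theorem 2.31]{Sako}, deferring to the one-dimensional predecessor paper. Your argument supplies what that citation leaves implicit, using only ingredients internal to this paper: smoothness of $V\xi$ propagated through the bounded commutators $B_i = D_i^{(2)}V - VD_i^{(1)}$ (the same consequence of smoothness of $V$ that the paper extracts in the proof of Proposition \ref{proposition: similarity and ALD}), the moment formula of Lemma \ref{lemma: mean of a bounded Borel function wrt p_t} and Lemma \ref{lemma: mean of a polynomial function wrt p_t}, the telescoping of $V^{-1}M_t^{(2)}V - M_t^{(1)}$ with exactly one bounded factor $C_{i(k)}/t$ per term, and Cauchy--Schwarz against mixed even moments. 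Two points in your write-up deserve emphasis as being handled correctly. First, the conjugated factors are $V^{-1}\bigl(D_{i(l)}^{(2)}/t\bigr)V = \bigl(D_{i(l)}^{(1)} + C_{i(l)}\bigr)/t$, which are genuinely self-adjoint even though $C_{i(l)}$ alone is not; this is what allows you to move them across the inner product, and after conjugating back by $V$ the resulting vector norms are square roots of mixed even moments of $p_t^{(2)}$, exactly as you claim. Second, the uniform-in-$t$ bound on those moments --- which you rightly single out as the delicate step, since $p_t^{(2)}$ is not yet known to converge --- does follow from the pure-power bound $\limsup_t \int v_i^{2n}\, dp_t^{(2)} \le \|[D_i^{(2)}, U_2]\|^{2n}$ invoked in the proof of Theorem \ref{theorem: asymptotic concentration}, combined with H\"older's inequality; this requires $V\xi$ to be smooth, which you establish beforehand, so there is no circularity. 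The trade-off between the two routes: the paper's citation is shorter, but your proof is verifiable without consulting the earlier one-dimensional paper, handles the domain bookkeeping for the unbounded operators $D_i^{(j)}$ explicitly, and makes quantitative the $O(1/t)$ rate at which the moments of $p_t^{(1)}$ and $p_t^{(2)}$ agree.
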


\begin{proof}
The proof is substantially the same as that of \cite[Theorem 2.31]{Sako}.
\end{proof}

\section{Convergence theorems on homogeneous QWs}

\subsection{Definition of homogeneous QWs}
\label{subsection: homogeneous QWs}

\begin{definition}
A quadruple $(\Hil, \left( U^t \right)_{t \in \Z}, E, \rho)$
is called a $d$-dimensional {\rm homogeneous} quantum walk, if the following conditions holds:
\begin{enumerate}
\item
$\left( \Hil, \left(U^t \right)_{t \in \Z}, E \right)$ is a $d$-dimensional quantum walk.
\item
$\rho = (\rho(\xx))_{\xx \in \Z^d}$ is a unitary representation of the additive group $\Z^d$ on $\Hil$.
\item
For every Borel subset $\Omega \subset \R^d$, $\rho(\xx)^{-1} E(\Omega) \rho(\xx) = E(\Omega + \xx)$.
\item
For every $\xx \in \Z^d$, $U \rho(\xx) = \rho(\xx) U$.
\item
The rank of $E \left( [0, 1)^d \right)$ is finite.
\end{enumerate}
The rank of $E \left( [0, 1)^d \right)$ is called the {\rm degree of freedom}.
\end{definition}

We note that the image of $E \left( [0, 1)^d \right)$ is the fundamental domain of the action by $(\rho(\xx))_{\xx \in \Z^d}$. More precisely, $\{\rho(\xx) (\mathrm{image}(E([0, 1)^d))) \}_{\xx \in \Z^d}$ is mutually orthogonal and generates $\Hil$.
We may consider a quadruple satisfying conditions $(1)$, $(2)$, $(3)$, $(4)$ which does not satisfy condition $(5)$.
In such a case, we call the quadruple {\it a homogeneous quantum walk with infinite degree of freedom}.

Let $(\Hil, \left( U^t \right)_{t \in \Z}, E, \rho)$ be a homogeneous quantum walk with finite degree of freedom.
We denote by $n$ the degree of freedom.
There exists a unitary operator
\[V \colon \Hil \to \ell_2(\Z^d) \otimes \C^n\]
satisfying that $V$ maps $\mathrm{image}(E([0, 1)^n))$ onto $\delta_{\mathbf{0}} \otimes \C^n$ and that $V$ is compatible with $\rho$ and with the right regular representation $\widetilde{\rho}$ on $\ell_2(\Z^d)$.
Let $\widetilde{E}$ be the projection-valued measure on $\R^d$ defined by 
\[\widetilde{E}(\Omega) = (\textrm{the \ orthogonal \ projection\ } \ell_2(\Z^d) \otimes \C^n \to \ell_2(\Z^d \cap \Omega) \otimes \C^n ).\]
It is easy to show that $V$ is smooth with respect to $E$ and $\widetilde{E}$.
In fact, $V$ is analytic with respect to $E$ and $\widetilde{E}$.
Thus we obtain a new homogeneous quantum walk
\[\left( \ell_2(\Z^d) \otimes \C^n, (V U^t V^{-1})_{t \in \Z}, \widetilde{E}, \widetilde{\rho} \right),\]
which is similar to the original walk
in the sense of Definition \ref{definition: similarity}.
If the original walk is smooth, then new one is also smooth.
If the original walk is analytic, then new one is also analytic.
By Proposition \ref{proposition: similar asymptotic behavior},
if the latter walk has limit distribution (as in Theorem \ref{theorem: limit distribution}), the original walk has the same limit distribution.

For the rest of this paper, we study $d$-dimensional {\it analytic homogeneous} quantum walks.
Without loss of generality, we may concentrate on the homogeneous walks of the form $(\ell_2(\Z^d) \otimes \C^n, \left( U^t \right)_{t \in \Z}, \widetilde{E}, \widetilde{\rho})$.

\subsection{The proof of the convergence theorem for homogeneous QWs}

We often use the inverse Fourier transform
$\F^{-1} \colon \ell_2(\Z^d) \otimes \C^n \to L^2 \left( \T_{2 \pi}^d \right) \otimes \C^n$.
We express the Pontryagin dual $\T_{2 \pi}^d$ of $\Z^d$ by $\{(k_1, \cdots, k_d) \ |\ k_1, \cdots, k_d \in \R / (2 \pi \Z)\}$.
Via the inverse Fourier transform $\F^{-1}$,
\begin{itemize}
\item
the analytic unitary operator $U$ corresponds to a $(d \times d)-$matrix $\widehat{U}$ whose entries are analytic functions on $\T_{2 \pi}^d$,
\item
and the diagonal operator
\[D_i = \int_{\xx \in \R^d} x_i \widetilde{E}(d x_1 \cdots d x_i \cdots d x_d) \colon \delta_{\xx} \otimes \delta_y \mapsto x_i \delta_{\xx} \otimes \delta_y\]
corresponds to the partial differential operator $- \ii \frac{\partial}{\partial k_i}$.
\end{itemize}

Our main result (Theorem \ref{theorem: limit distribution})
relies on the following proposition.

\begin{proposition}\label{proposition: convergence of LD for homogeneous QW}
Let $\left( U^t \right)_{t \in \Z}$
be a homogeneous analytic quantum walk acting on $\ell_2(\Z^d) \otimes \C^n$.
Let $w_1, \cdots, w_d$ be real numbers.
Denote by $D$ the operator $\sum_i w_i D_i$.
\begin{enumerate}
\item
As $t$ tends to infinity,
the average $c_t / t = (U^{-t} D U^t - D) / t$ of the logarithmic derivative of $\left( U^t \right)_{t \in \Z}$ converges to some self-adjoint operator $H$ {\rm in the strong operator topology}.
\item
Then as $t$ tends to infinity, the unitary operators defined by the commutators
\[U^{-t} \exp \left( \ii \frac{D}{t} \right) U^t \exp \left( - \ii \frac{D}{t} \right) \]
converge to the unitary operator $\exp(\ii H)$ {\rm in the strong operator topology}.
\end{enumerate}
For the case that the walk is one-dimensional $(d = 1)$,
the above two sequences converge {\rm in norm}, and the limits are analytic operators.
\end{proposition}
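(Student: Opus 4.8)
The plan is to pass to the Fourier picture, reduce both sequences to explicit matrix-valued multiplication operators on the circle, and then exploit the fact that in a single variable an analytic unitary family can be diagonalized analytically. Concretely, I would apply the inverse Fourier transform $\F^{-1}\colon \ell_2(\Z)\otimes\C^n\to L^2(\T_{2\pi})\otimes\C^n$ recalled above, under which $U$ becomes multiplication by an analytic, pointwise-unitary matrix function $\widehat U(k)$ and $D=w_1 D_1$ becomes $-\ii w_1\,\tfrac{d}{dk}$. Writing $M_f$ for multiplication by $f$ and using $[\tfrac{d}{dk},M_f]=M_{f'}$ together with the Leibniz rule, a direct computation gives
\[
\frac{1}{t}U^{-t}[D,U^t]\;=\;-\frac{\ii w_1}{t}\,M_{\widehat U(k)^{-t}\frac{d}{dk}\widehat U(k)^{t}},
\]
and likewise the commutator in (2) equals $M_{g_t}$ with $g_t(k)=\widehat U(k)^{-t}\widehat U(k+w_1/t)^{t}$. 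Since for multiplication operators the operator norm is the sup-norm over the compact torus, both assertions reduce to uniform convergence statements for these explicit functions.

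Next I would diagonalize $\widehat U$, which is the step where $d=1$ is decisive. By the classical one-variable analytic perturbation theory (Rellich's theorem), the analytic family of unitary matrices $\widehat U(k)$ admits globally analytic eigenvalues $e^{\ii\theta_j(k)}$ with real analytic phases $\theta_j$ and an analytic, mutually orthonormal family of spectral projections $P_j(k)$ satisfying $P_j(k)P_l(k)=\delta_{jl}P_j(k)$ for every $k$, including at eigenvalue coincidences. Going once around the circle may permute the branches, but this permutation acts simultaneously on the pairs $(\theta_j',P_j)$, so the symmetric combination $\sum_j\theta_j'(k)P_j(k)$ is single-valued and defines a genuine analytic matrix function on $\T_{2\pi}$. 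In dimension $d\ge2$ no such analytic (or even $C^1$) choice exists in general; this is exactly the obstruction exhibited in Section \ref{section: exotic example}, and it is why only strong-operator convergence survives there. Securing this analytic, globally orthonormal eigendecomposition and verifying its single-valuedness under monodromy is the main obstacle of the proof.

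Once the diagonalization is in hand, I would insert $\widehat U(k)^t=\sum_j e^{\ii t\theta_j(k)}P_j(k)$ and differentiate, obtaining
\[
\widehat U(k)^{-t}\tfrac{d}{dk}\widehat U(k)^{t}
=\ii t\sum_j\theta_j'(k)P_j(k)+\sum_{j,l}e^{\ii t(\theta_j(k)-\theta_l(k))}P_l(k)P_j'(k).
\]
The off-diagonal remainder is uniformly bounded because all analytic data are bounded on the compact circle, so after multiplying by $-\ii w_1/t$ it is $O(1/t)$ in sup-norm; hence (1) converges in operator norm to the multiplication operator $H$ by the analytic function $w_1\sum_j\theta_j'(k)P_j(k)$, which is self-adjoint and analytic.

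Finally, for (2) I would Taylor-expand $t\,\theta_j(k+w_1/t)=t\theta_j(k)+w_1\theta_j'(k)+O(1/t)$ uniformly, using boundedness of $\theta_j''$, together with $P_j(k+w_1/t)=P_j(k)+O(1/t)$; collecting the diagonal contributions through $P_l P_j=\delta_{jl}P_j$ then yields $g_t(k)=\sum_j e^{\ii w_1\theta_j'(k)}P_j(k)+O(1/t)$, which converges in sup-norm to $\exp(\ii H)$. After the one-dimensional diagonalization is secured, both convergence claims and the analyticity of the limits thus follow from routine uniform Taylor estimates on the compact torus.
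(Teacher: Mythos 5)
Your argument, as written, proves only the one-dimensional addendum of the proposition, not its main assertions. From the first sentence you work on the circle: you pass to $L^2(\T_{2\pi})\otimes\C^n$, a single frequency variable $k$, and Rellich's one-variable theorem to diagonalize $\widehat U(k)$ globally. That part is sound, and it is essentially the same route as the paper's treatment of $d=1$ (the paper covers the torus by two closed segments and applies Kato's analytic perturbation theory on each, rather than handling monodromy by the symmetrized sum $\sum_j\theta_j'P_j$ as you do; both devices work). But items (1) and (2) concern $d$-dimensional walks --- $D=\sum_{i=1}^d w_i D_i$ acting on $\ell_2(\Z^d)\otimes\C^n$ --- and for $d\ge 2$ you say only that the failure of analytic diagonalization explains ``why only strong-operator convergence survives there.'' That sentence concedes the multi-dimensional case instead of proving it; yet the strong convergence for general $d$ is precisely what Theorem \ref{theorem: limit distribution} consumes, so it is the heart of the proposition.

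The missing idea, which is how the paper proceeds, is to restrict to lines in the direction $\ww$: for each base point $\kk_0\in\T_{2\pi}^d$, the map $s\mapsto\widehat u(s)=\widehat U(\kk_0+s\ww)$ is an analytic \emph{one-parameter} family of unitaries, so one-variable perturbation theory applies along each segment even though no joint $C^1$ diagonalization in $(k_1,\dots,k_d)$ exists. Your computation, run along each such segment, then gives convergence of $\frac{1}{\ii t}\,\widehat U(\kk_0)^{-t}\sum_i w_i \frac{\partial \widehat U^t}{\partial k_i}(\kk_0)$ at every point $\kk_0$ --- uniformly on each segment, but \emph{not} uniformly on the torus, as the example of Section \ref{section: exotic example} shows. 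To convert pointwise convergence of the matrix-valued symbols into strong operator convergence of the corresponding multiplication operators, one needs a uniform bound on their norms, which is supplied by Lemma \ref{lemma: bounded by c1}: $\bigl\|\frac{1}{t}U^{-t}[D,U^t]\bigr\|\le\|[D,U]\|$ independently of $t$ and $\kk_0$; dominated convergence then yields (1), and the analogous pointwise argument, using that the operators in (2) are unitaries and hence uniformly bounded, yields (2). Without this segment trick plus the boundedness step, your proposal establishes only the $d=1$ statement.
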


To employ the analytic perturbation theory by Tosio Kato,
we suppose analyticity.
We denote by $\mathrm{diag}(\alpha_i)_i$ the diagonal matrix
whose diagonal entries are $\alpha_1$, $\cdots$, $\alpha_n$.

\begin{proof}
We make use of the Fourier transform $\widehat{U}(\kk), \kk \in \T_{2 \pi}^d$ of the walk.
The operator $D$ corresponds 
to the operator $\widehat{D} = - \ii \sum_i w_i \frac{\partial}{\partial k_i}$.
Denote by $\ww$ the real vector $(w_i)$.
By Lemma \ref{lemma: partial derivative and commutator}, 
the average of the logarithmic derivative $c_t / t = U^{-t} [D, U^t] / t$ corresponds to
\begin{eqnarray}\label{equation: ALD and directional derivative}
\widehat{c_t} / t=
\frac{1}{t} \widehat{U}^{-t} \left[\widehat{D}, \widehat{U}^t \right]
=
\frac{1}{\ii t} \widehat{U}^{-t} \sum_i w_i \frac{\partial  \widehat{U}^t}{\partial k_i}.
\end{eqnarray}

We make use of
the directional derivative $\frac{d}{d s} = \sum_i w_i \frac{\partial}{\partial k_i}$ along the curve of the form
$\R \ni s \mapsto \kk_0 + s \ww \in \T_{2 \pi}^d$.
Choose an arbitrary base point $\kk_0 \in \T_{2 \pi}^d$.
Let $I$ be a closed interval in $\R$ whose interior part includes $0$.
For a while, we study the behavior of $\widehat{U}$ on the segment $\kk_0 + I \ww \subset \T_{2 \pi}^d$.
Making the interval $I$ shorter, we may assume that the map
\[I \ni s \mapsto \kk_0 + s \ww \in \T_{2 \pi}^d.\]
is injective.
Denote by $\widehat{u}(s)$ the unitary $\widehat{U}(\kk_0 + s \ww)$.
Note that the map $I \ni s \mapsto \widehat{u}(s)$ can be extended to a holomorphic map defined on a complex domain including $I$.
By equation (\ref{equation: ALD and directional derivative}),
The average of the logarithmic derivative $\widehat{c_t} / t$ is equal to
\begin{eqnarray}\label{equation: ALD and directional derivative 2}
\frac{\widehat{c_t}(\kk_0 + s \ww)}{t} =
\frac{1}{\ii t} \widehat{u}^{-t}(s) \frac{d  }{d s} (\widehat{u}^t(s)).
\end{eqnarray}

Let us make use of the analytic perturbation theory by T. Kato.
The unitary $\widehat{u}(s)$ can be decomposed as follows
\begin{eqnarray}
\label{equation: diagonal decomposition}
\widehat{u}(s)
=
\widehat{v}(s)
\cdot
\mathrm{diag}(\lambda_1(s), \cdots, \lambda_n(s))
\cdot
\widehat{v}(s)^{-1},
\end{eqnarray}
by \cite[Theorem II.1.8 and II.1.10]{TKato}.
See also \cite[Section II.4.6]{TKato}.
Here $\widehat{v}$ is an analytic map to invertible matrices,
and the eigenvalue functions $\lambda_1$, $\cdots$, $\lambda_n$ are holomorphic functions defined on a complex domain including $I$. 

A direct calculation yields that
\begin{eqnarray*}
\frac{d}{d s} (\widehat{u}(s)^t)
&=&
\frac{d \widehat{v}}{d s}(s) \cdot 
\mathrm{diag} \left( \lambda_i(s)^t \right) \cdot 
\widehat{v}(s)^{-1} \\
&& \qquad +
\widehat{v}(s) \cdot \mathrm{diag} \left( \dfrac{d}{d s} (\lambda_i(s)^t) \right) \cdot \widehat{v}(s)^{-1} \\
&& \qquad \qquad -
\widehat{v}(s) \cdot \mathrm{diag}(\lambda_i(s)^t) 
\cdot \widehat{v}(s)^{-1} \cdot \frac{d \widehat{v}}{d s}(s) \cdot \widehat{v}(s)^{-1}.
\end{eqnarray*}
The second term is equal to 
\[
t \cdot \widehat{v}(s) 
\cdot \mathrm{diag} 
\left( \lambda_i(s)^{t - 1} \cdot \dfrac{d \lambda_i}{d s}(s) \right) 
\cdot \widehat{v}(s)^{-1}.
\]
As $t$ tends to infinity,
the norm increases linearly.
The norms of the first and the third terms are bounded.
It follows that for the calculation of the average with respect to time,
it suffices to see the second term.
By equation (\ref{equation: ALD and directional derivative 2}),
we have
\begin{eqnarray}
&&
\lim_{t \to \infty} 
\frac{\widehat{c_t}(\kk_0 + s \ww)}{t}\\
&=&
\widehat{v}(s)
\cdot
\mathrm{diag}(\lambda_i(s)^t)
\cdot
\widehat{v}(s)^{-1}
\cdot 
\widehat{v}(s) 
\cdot 
\mathrm{diag} 
\left( \lambda_i(s)^{t - 1} \cdot \dfrac{d \lambda_i}{d s}(s) \right) 
\cdot 
\widehat{v}(s)^{-1}\\
&=&\label{equation: limit of LD on segment}
\widehat{v}(s)
\cdot
\mathrm{diag} 
\left( \lambda_i(s)^{- 1} \cdot \dfrac{d \lambda_i}{d s}(s) \right) 
\cdot 
\widehat{v}(s)^{-1}.
\end{eqnarray}
The convergence is uniform on the closed interval $I$.
Because $|\lambda_i(s)|^2 = 1$, $\lambda_i(s)^{- 1} \cdot \dfrac{d \lambda_i}{d s}(s)$ is in $\ii \R$, as $t$ tends to infinity, the average of logarithmic derivatives $\widehat{c_t}(\kk_0) / t$ converges to a skew self-adjoint matrix $\ii \widehat{H}(\kk_0)$, at each point $\kk_0 \in \T_{2 \pi}^d$.
The operator norm is uniformly bounded by $\left\| \sum_i w_i \frac{\partial \widehat{U}}{\partial k_i} \right\|$ due to Lemma \ref{lemma: bounded by c1}.
This upper bound is independent of $\kk_0$.
Therefore, the convergence of $\widehat{c_t}(\kk_0) / t$ 
at each point $\kk_0$ yields that in the strong operator topology.
Applying the Fourier transform, 
we obtain the first assertion.

Consider the case that $d$ is one.
Then the convergence on each closed interval in $\T_{2 \pi}$ is uniform,
and the limit is given by a matrix-valued analytic function
as in equation (\ref{equation: limit of LD on segment}).
Since the one-dimensional torus  $\T_{2 \pi}$ is a union of two segments,
the average of logarithmic derivative converges in norm and the limit is analytic.

%For the second half of the proof, we denote by $\widehat{H}(\kk)$ the limit
%\[\widehat{H}(\kk) = \lim_{t \to \infty} \frac{1}{\ii t} \widehat{U}(\kk)^{- t} \sum_i w_i \frac{\partial \widehat{U}^t}{\partial k_i}(\kk).\]
Let us consider general $d$ again.
The inverse Fourier transform of $U^{-t} \exp \left( \ii \frac{D}{t} \right) U^t \exp \left( - \ii \frac{D}{t} \right)$ is
\[\widehat{U}^{-t} \exp \left( t^{-1} \sum_i w_i \frac{\partial}{\partial k_i} \right) 
\widehat{U}^t \exp \left( - t^{-1} \sum_i w_i \frac{\partial}{\partial k_i} \right).\]
For every $\C^n$-valued analytic functions $\widehat{\xi}(\kk)$ defined on $\T_{2 \pi}^d$,
we calculate
\[\widehat{U}^{-t} \exp \left( t^{-1} \sum_i w_i \frac{\partial}{\partial k_i} \right) 
\widehat{U}^t \exp \left( - t^{-1} \sum_i w_i \frac{\partial}{\partial k_i} \right) 
\widehat{\xi}\]
as follows.
The $\C^n$-valued function $\exp \left( - t^{-1} \sum_i w_i \frac{\partial}{\partial k_i} \right) \widehat{\xi}$ is given by
\begin{eqnarray*}
\left[ \exp \left( - t^{-1} \sum_i w_i \frac{\partial}{\partial k_i} \right) \widehat{\xi} \right]
(\kk)
=
\sum_{m = 0}^\infty \frac{1}{m !} 
t^{-m} \left[ \left(\sum_i w_i \frac{\partial}{\partial k_i} \right)^m \widehat{\xi} \right] (\kk).
\end{eqnarray*}
This is the Taylor expansion of $\xi(\kk - t^{-1} \ww)$.
Therefore, the unitary $\exp \left( - t^{-1} \sum_i \frac{\partial}{\partial k_i} \right)$ acts on analytic vectors by the translation of $t^{-1} \ww$.
Since analytic vectors are dense in the Hilbert space,
the unitary $\exp \left( - t^{-1} \sum_i \frac{\partial}{\partial k_i} \right)$ acts on every vector in $L^2(\T_{2 \pi}^d) \otimes \C^n$
by the translation of $t^{-1} \ww$.
By the same reason, the unitary $\exp \left( t^{-1} \sum_i \frac{\partial}{\partial k_i} \right)$ means the translation by $- t^{-1} \ww$.
It follows that for every $\widehat{\xi} \in L^2 \left( \T_{2 \pi}^d \right) \otimes \C^n$, the following equation holds:
\begin{eqnarray*}
&&
\left[ \widehat{U}^{-t} \exp \left( t^{-1} \sum_i \frac{\partial}{\partial k_i} \right) 
\widehat{U}^t \exp \left( - t^{-1} \sum_i \frac{\partial}{\partial k_i} \right) \widehat{\xi} \right] (\kk)\\
&=&
\widehat{U}(\kk)^{-t} \cdot
\widehat{U}(\kk + t^{-1} \ww)^t \cdot \widehat{\xi}(\kk).
\end{eqnarray*}
Since the vector 
$\widehat{\xi}$
in $L^2(\T_{2 \pi}^d) \otimes \C^n$ is arbitrary,
we obtain the following equation between two operators:
\begin{eqnarray}
&&
\left[ \widehat{U}^{-t} 
\exp \left( t^{-1} \sum_i \frac{\partial}{\partial k_i} \right) 
\widehat{U}^t 
\exp \left( - t^{-1} \sum_i \frac{\partial}{\partial k_i} \right) \right] (\kk)\\
&=&
\widehat{U}(\kk)^{-t} \cdot
\widehat{U}(\kk + t^{-1} \ww)^t.
\label{eqnarray: commutator and translation}
\end{eqnarray}

We make use of the restriction $\widehat{u}(s) = \widehat{U}(\kk + s \ww)$ on the segment $\kk_0 + I \ww \subset \T_{2 \pi}^d$
and its diagonal decomposition $\widehat{v}(s) \cdot \mathrm{diag}(\lambda_i(s))_i \cdot \widehat{v}(s)^{-1}$
as in equation (\ref{equation: diagonal decomposition}).
Assume that $s$ is in $I$.
Since $I$ is simply connected, there exist real-valued analytic functions $l_1$, $l_2$, $\cdots$, $l_n$ such that
$\exp( \ii l_i(s)) = \lambda_i(s)$.

The formula (\ref{eqnarray: commutator and translation}) is equal to 
$\widehat{u}(0)^{-t}
\widehat{u}(t^{-1})^t$.
If $t$ is large, then the matrix $\widehat{v}(t^{-1})$ is close to $\widehat{v}(0)$ in norm.
The unitary matrix
$\widehat{u}(0)^{-t}
\widehat{u}(t^{-1})^t$
is uniformly close to
\begin{eqnarray*}
&&
\widehat{v}(0) \cdot 
\mathrm{diag}(\lambda_i(0)^{-t})_i 
\cdot \widehat{v}(0)^{-1} \cdot
\widehat{v}(t^{-1}) \cdot 
\mathrm{diag}(\lambda_i(t^{-1})^t)_i \cdot 
\widehat{v}(t^{-1})^{-1}\\
&\sim&
\widehat{v}(0) \cdot \mathrm{diag}(\lambda_i(0)^{-t})_i \cdot
\mathrm{diag}(\lambda_i(t^{-1})^t)_i \cdot 
\widehat{v}(t^{-1})^{-1}\\
&=&
\widehat{v}(0)
\mathrm{diag}\left(\exp \left( \ii \frac{l_i(t^{-1}) - l_i(0)}{t^{-1}} \right) \right)_i
\widehat{v}(0)^{-1}.
\end{eqnarray*}
Note that the logarithms $l_1(s)$, $\cdots$, $l_n(s)$ are differentiable,
since $\lambda_i(s)$ are analytic.

Therefore we have
\begin{eqnarray*}
\lim_{t \to \infty } \widehat{u}(0)^{-t}
\widehat{u}(t^{-1})^t 
&=& 
\widehat{v}(0)
\mathrm{diag}\left(\exp\left( \ii \dfrac{d l_i}{d s}(0) \right) \right)_i
\widehat{v}(0)^{-1}
\\
&=& 
\widehat{v}(0)
\mathrm{diag}\left(\exp\left( \lambda_i(0)^{-1} \dfrac{d \lambda_i}{d s}(0) \right) \right)_i
\widehat{v}(0)^{-1}\\
&=& 
\exp\left( \widehat{v}(0)
\mathrm{diag}\left(\lambda_i(0)^{-1} \dfrac{d \lambda_i}{d s}(0) \right)_i
\widehat{v}(0)^{-1}
\right).
\end{eqnarray*}
By equation (\ref{equation: limit of LD on segment}) and by the definition of the skew self-adjoint matrix $\ii \widehat{H}(\kk_0)$, we have the following convergence at each point $\kk_0$:
\begin{eqnarray*}
\lim_{t \to \infty}
\widehat{U}(\kk_0)^{-t} \cdot
\widehat{U}(\kk_0 + t^{-1} \ww)^t
&=&
\exp \left( \ii \widehat{H}(\kk_0) \right).
\end{eqnarray*}
Because the sequence consists of unitary operators, the convergence at each point implies that in the strong operator topology.
Applying the Fourier transform, we obtain the second assertion.

In the case that $d$ is one, the convergence is that in the operator norm topology, since the convergence is uniform on each segment included in the torus.
\end{proof}

In case that $d$ is one, the following theorem is proved in \cite{SaigoSako}.

\begin{theorem}
\label{theorem: limit distribution}
For any $d$-dimensional homogeneous analytic quantum walk with finite degree of freedom,
and for any initial unit vector, 
the weak limit of probability measures $\{p_t\}$ defined in Subsection
\ref{subsection: asymptotic behavior of distributions} exists.
\end{theorem}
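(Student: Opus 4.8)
The plan is to bypass moments entirely and work with characteristic functions, so that Proposition \ref{proposition: convergence of LD for homogeneous QW}(2) can be used directly and no smoothness of the initial vector is required; the conclusion then follows from L\'evy's continuity theorem.

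First I would reduce to the standard form. By the discussion in Subsection \ref{subsection: homogeneous QWs}, the given walk is analytically similar, via an analytic \emph{unitary} intertwiner $V \colon \Hil \to \ell_2(\Z^d)\otimes\C^n$, to a walk of the form $(\ell_2(\Z^d)\otimes\C^n, (U^t)_{t\in\Z}, \widetilde E, \widetilde\rho)$. Proposition \ref{proposition: similar asymptotic behavior} transports the limit distribution across $V$ only for smooth initial vectors, so since the present theorem allows an arbitrary unit vector I would first upgrade that invariance. Writing $\varphi_t(\ww)=\int_{\R^d}\exp(\ii(\ww,\vv))\,p_t(d\vv)$ for the characteristic function (computed below) and using that $V^{-1}D_i^{(2)}V-D_i^{(1)}$ is bounded because $V$ is smooth with respect to $E$ and $\widetilde E$ (Lemma \ref{lemma: partial derivative and commutator} for $E_1\oplus E_2$, exactly as in Proposition \ref{proposition: similarity and ALD}), one sees that $V^{-1}\exp(\ii D^{(2)}/t)V=\exp(\ii(D^{(1)}+B)/t)$ with $B$ bounded; a Duhamel estimate gives $\|\exp(\ii(D^{(1)}+B)/t)-\exp(\ii D^{(1)}/t)\|=O(1/t)$, so the characteristic functions of the two walks differ by $O(1/t)$ and have the same pointwise limit. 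This lets me assume the walk is already in standard form.

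Next, the core computation. Fix $\ww=(w_i)\in\R^d$ and put $D=\sum_i w_i D_i$. By Lemma \ref{lemma: mean of a bounded Borel function wrt p_t} applied to $f(\vv)=\exp(\ii(\ww,\vv))$, together with the identity $\int_{\R^d}\exp(\ii(\ww,\xx/t))E(d\xx)=\sigma(\ww/t)=\exp(\ii D/t)$, I obtain
\[\varphi_t(\ww)=\langle\exp(\ii D/t)\,U^t\xi,U^t\xi\rangle=\langle U^{-t}\exp(\ii D/t)U^t\,\xi,\xi\rangle.\]
Now factor $U^{-t}\exp(\ii D/t)U^t=\bigl(U^{-t}\exp(\ii D/t)U^t\exp(-\ii D/t)\bigr)\exp(\ii D/t)$. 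The first factor converges to $\exp(\ii H)$ in the strong operator topology by Proposition \ref{proposition: convergence of LD for homogeneous QW}(2), while $\exp(\ii D/t)\to\mathrm{id}_\Hil$ strongly (since $D/t\to0$ on the dense domain $\mathrm{dom}(D)$ and the unitaries are uniformly bounded). A product of uniformly bounded, strongly convergent sequences converges strongly, so $U^{-t}\exp(\ii D/t)U^t\to\exp(\ii H)$ strongly and hence $\varphi_t(\ww)\to\varphi(\ww):=\langle\exp(\ii H)\xi,\xi\rangle$ for every $\ww$. Because only strong convergence and unitarity are used, this holds for an arbitrary unit vector $\xi$.

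Finally I would check continuity of the limit at the origin and conclude. Here $H=H^{(\ww)}$ depends on $\ww$, and by the uniform bound recorded in the proof of Proposition \ref{proposition: convergence of LD for homogeneous QW} one has $\|H^{(\ww)}\|\le\|\sum_i w_i\partial_i(U)\|\le C\|\ww\|$; since $H^{(\ww)}$ is self-adjoint this gives $\|\exp(\ii H^{(\ww)})-\mathrm{id}_\Hil\|\le\|H^{(\ww)}\|\to0$ as $\ww\to0$, whence $\varphi(\ww)\to1=\varphi(0)$. Thus $\varphi_t\to\varphi$ pointwise with $\varphi$ continuous at $0$, and L\'evy's continuity theorem on $\R^d$ then shows that $\varphi$ is the characteristic function of a Borel probability measure $p_\infty$ and that $p_t\to p_\infty$ weakly. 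The main obstacle I anticipate is not the analytic heavy lifting, which is already packaged in Proposition \ref{proposition: convergence of LD for homogeneous QW}, but the bookkeeping needed to push the similarity reduction through for non-smooth initial vectors; once that is in place, the theorem is a short consequence of the strong-operator convergence of the conjugated translations together with L\'evy's theorem.
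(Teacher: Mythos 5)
Your argument is correct, and its analytic core coincides with the paper's: both express the characteristic function of $p_t$ as $\langle U^{-t}\exp(\ii D/t)U^t\xi,\xi\rangle$ and then feed in the strong-operator convergence of $U^{-t}\exp(\ii D/t)U^t\exp(-\ii D/t)$ to $\exp(\ii H)$ from Proposition \ref{proposition: convergence of LD for homogeneous QW}. The genuine difference is how pointwise convergence of characteristic functions is turned into weak convergence of $\{p_t\}$. The paper first assumes the initial vector $\xi$ is smooth, upgrades convergence on trigonometric functions to convergence on all bounded continuous functions by combining Theorem \ref{theorem: asymptotic concentration} (asymptotic concentration on a compact set, which is precisely where smoothness of $\xi$ is needed) with Stone--Weierstrass, thereby identifying $p_\infty(\,\cdot\,)=\langle\mathcal{E}(\,\cdot\,)\xi,\xi\rangle$ for the joint spectral measure $\mathcal{E}$ of the commuting limits $(H_1,\cdots,H_d)$; it then removes the smoothness hypothesis by approximating $\widetilde{\xi}$ with smooth vectors and an $\ell_1$ estimate. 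You replace all of this by a single application of L\'evy's continuity theorem, verifying continuity of the limit at $\ww=0$ through $\|H^{(\ww)}\|\le\|\sum_i w_i\partial_i(U)\|\le C\|\ww\|$ (a bound indeed recorded, via Lemma \ref{lemma: bounded by c1}, inside the proof of Proposition \ref{proposition: convergence of LD for homogeneous QW}) together with the elementary inequality $\|\exp(\ii A)-\mathrm{id}\|\le\|A\|$ for self-adjoint $A$; and since you also observe that $\exp(\ii D/t)\to\mathrm{id}$ strongly, no smoothness of $\xi$ enters at any stage, so the smooth/non-smooth case distinction disappears entirely. Your Duhamel comparison for the similarity reduction is likewise a bit more careful than the paper, whose transport statement (Proposition \ref{proposition: similar asymptotic behavior}) is only formulated for smooth initial vectors. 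The trade-off: the paper's longer route also delivers the explicit description of the limit measure as a spectral measure, the compactness of its support, and convergence of moments (Corollary \ref{corollary: limit distribution}), whereas your route yields existence of the weak limit only --- which is, however, exactly what the theorem asserts.
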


The limit distribution is described as follows.
Let $(U^t)_{t \in \Z}$ be a homogeneous quantum walk acting on $\ell_2(\Z^d) \otimes \C^n$.
Let $\xi \in \ell_2(\Z^d) \otimes \C^n$ be an initial unit vector.
By Proposition \ref{proposition: convergence of LD for homogeneous QW}, 
the limit of the average of logarithmic derivatives
\[H_i = \lim_{t \to \infty} \frac{1}{\ii t} U^{-t} \partial_i(U^t) \]
exists and is a bounded self-adjoint operator.
Since the self-adjoint operators $(D_1$, $\cdots$, $D_d)$ mutually commute, the self-adjoint operators $(H_1$, $\cdots$, $H_d)$ also mutually commute.
There exists a unique projection-valued Borel probability measure $\mathcal{E}$ on a compact subset of $\R^d$
satisfying that for every integers $m(1), \cdots, m(d) \in \Z_{\le 0}$, 
\[\prod_i H_i^{m(i)} 
= \int_{\vv \in \R^d} \prod_i v_i^{m(i)} \mathcal{E}(d \vv).\]
The following argument shows that the limit distribution of the walk is equal to 
$p_\infty( \ \cdot \ ) = \langle \mathcal{E}( \ \cdot \ ) \xi, \xi \rangle$.

\begin{proof}
We first assume that $\xi$ is smooth.
Via the inverse Fourier transform, the walk $U$ corresponds to an $(n \times n)$-matrix $\widehat{U}$ whose entries are analytic functions on $\T_{2 \pi}^d$.
The initial unit vector $\xi$ corresponds to a smooth element $\widehat{\xi} \in L^2 \left( \T_{2 \pi}^d \right) \otimes \C^n$.
For every $i = 1, \cdots, d$, the self-adjoint operator $\widehat D_i$ is given by $- \ii \frac{\partial}{ \partial k_i}$.
Define $\widehat D$ by $- \ii \sum_i w_i \frac{\partial}{\partial k_i}$.
By equation (\ref{equation: mean of a bounded Borel function wrt p_t})
in Lemma \ref{lemma: mean of a bounded Borel function wrt p_t},
The mean of the function $\exp(\ii(\cdot, \ww)_{\R^d})$ on $\R^d$ with respect to $p_t$ is equal to the following inner product:
\begin{eqnarray*}
& &
\int_{\vv \in \R^d} \exp(\ii(\vv, \ww)_{\R^d})p_t(d \vv)\\
&=&
\left\langle \exp \left( \ii t^{-1} \widehat{D} \right) \widehat{U}^t \widehat{\xi}, \widehat{U}^t \widehat{\xi} \right\rangle\\
&=&
\left\langle \widehat{U}^{-t} \exp \left( \ii t^{-1} \widehat{D} \right) \widehat{U}^t  \exp \left( -\ii t^{-1} \widehat{D} \right) \exp \left( \ii t^{-1} \widehat{D} \right)\widehat{\xi}, \widehat{\xi} \right\rangle.
\end{eqnarray*}
Let $\widehat{H}$ be the limit of the averages of logarithmic derivatives of $\widehat{U}^t$ with respect to the differential operator $- \ii \sum w_i \frac{\partial}{\partial k_i}$.
As in the proof of Proposition 
\ref{proposition: convergence of LD for homogeneous QW},
define self-adjoint operators $\widehat{H_i}$ by
\[
- \ii \lim_{t \to \infty} 
\widehat{U}(\kk)^{-t} \frac{\partial \widehat{U}^t}{\partial k_i}(\kk),
\]
and $\widehat{H}$ by $\sum_{i=1}^d w_i \widehat{H_i}$.
By Proposition \ref{proposition: convergence of LD for homogeneous QW},
the operator 
\[\widehat{U}^{-t} \exp \left( \ii t^{-1} \widehat{D} \right) \widehat{U}^t  \exp \left( -\ii t^{-1} \widehat{D} \right)\]
converges to $\exp \left( \ii \widehat{H} \right)$ in the strong operator topology.
The vector $\exp \left( \ii t^{-1} \widehat{D} \right)\widehat{\xi}$ is given by
$\left[\exp \left( \ii t^{-1} \widehat{D} \right)\widehat{\xi} \right](\kk) = \widehat{\xi}(\kk + t^{-1} \ww)$.
It is uniformly close to $\widehat{\xi}(\kk)$.
As $t$ tends to infinity, the above integral converges to
\begin{eqnarray*}
\left\langle \exp \left( \ii \widehat{H} \right) \widehat{\xi}, 
\widehat{\xi} \right\rangle_{L^2 \left( \T_{2 \pi}^d \right) \otimes \C^d}
=
\left\langle \prod_i \exp(\ii w_i H_i) \xi, \xi \right\rangle.
\end{eqnarray*}
We obtain
\begin{eqnarray*}
\lim_{t \to \infty} \int_{\vv \in \R^d} \exp(\ii(\vv, \ww)_{\R^d})p_t(d \vv) 
&=& \left\langle \prod_i \exp(\ii w_i H_i) \xi, \xi \right\rangle\\
&=& \left\langle \int_{\vv \in \R^d} \exp(\ii (\vv, \ww)_{\R^d}) \mathcal{E}(d \vv) \xi, \xi \right\rangle.
\end{eqnarray*}

We obtain that for for every linear combination $g$ of 
$\{ \exp(\ii(\cdot, \ww)_{\R^d}) \ |\ \ww \in \R^d \}$,
\[\lim_{t \to \infty} \int_{\vv \in \R^d}  g(\vv) p_t(d \vv) 
= 
\left\langle \int_{\vv \in \R^d} g(\vv) \mathcal{E}(d \vv) \xi, \xi \right\rangle.\]
By Theorem \ref{theorem: asymptotic concentration},
there exists a compact subset $K$ of $\R^d$ such that
\[\lim_{t \to \infty} p_t(K) = 1.\]
The linear span $\mathrm{span}\{ \exp(\ii(\cdot, \ww)_{\R^d}) \ |\ \ww \in \R^d \}$ is the space of trigonometric functions.
By the theorem of Stone--Weierstrass,
the linear span is dense in $C(K)$ with respect to the supremum norm.
It follows that for every bounded continuous function $f$ on $\R^d$,
\[\lim_{t \to \infty} \int_{\vv \in \R^d} f(\vv) p_t(d \vv) 
= 
\left\langle \int_{\vv \in \R^d} f(\vv) \mathcal{E}(d \vv) \xi, \xi \right\rangle.\]
In the special case that $\xi$ is smooth, we finish the proof.

Let $\widetilde{\xi}$ be an initial unit vector in $\ell_2(\Z^d) \otimes \C^n = \ell_2(\Z^d \to \C^n)$ which is not necessarily smooth.
Let $\widetilde{p_t}$ be the sequence of probability measures defined 
by $U^t$ and $\widetilde{\xi}$.
Let $\epsilon$ be an arbitrary positive number.
Choose a smooth unit vector $\xi \in \ell_2(\Z^d \to \C^n)$ satisfying that $\left\|\xi - \widetilde{\xi} \right\| < \epsilon$.
For every element $\eta$ of $\ell_2(\Z^d \to \C^n)$,
define an $\ell_1$ function $|\eta|^2$ on $\Z^d$ by
\[
\left| \eta \right|^2(\xx) 
= \left\| \eta(\xx) \right\|_{\C^n}^2, 
\]
By the inequality $\left\| U^t \xi - U^t \widetilde{\xi} \right\| < \epsilon$,
we have
$\left\| |U^t \xi|^2 - \left|U^t \widetilde{\xi} \right|^2 \right\|_{\ell_1} < 2 \epsilon$.
By equation (\ref{equation: mean of a bounded Borel function wrt p_t}),
for every bounded Borel function $f$ on $\R^d$, we have
\begin{eqnarray*}
\left|
\int_{\vv \in \R^d} f(\vv) p_t(d \vv)
-
\int_{\vv \in \R^d} f(\vv) \widetilde{p_t} (d \vv)
\right|
&\le&
\sup_{\vv \in \R^d} |f(\vv)| \cdot \left\| |U^t \xi|^2 
- \left| U^t \widetilde{\xi} \right|^2 \right\|_{\ell_1} \\
&\le& 
2 \epsilon \sup_{\vv \in \R^d} |f(\vv)|.
\end{eqnarray*}
We also obtain
\begin{eqnarray*}
&&
\left|
\left\langle \int_{\vv \in \R^d} f(\vv) \mathcal{E}(d \vv) \xi, \xi \right\rangle
-
\left\langle \int_{\vv \in \R^d} f(\vv) \mathcal{E}(d \vv) \widetilde{\xi}, 
\widetilde{\xi} \right\rangle
\right|\\
&\le&
2
\left\|
\int_{\vv \in \R^d} f(\vv) \mathcal{E}(d \vv)
\right\|
\left\|\xi - \widetilde{\xi} \right\|
\le 
2 \epsilon \sup_{\vv \in \R^d} |f(\vv)|.
\end{eqnarray*}
It follows that for every bounded continuous function $f$ on $\R^d$,
\[\lim_{t \to \infty} \int_{\vv \in \R^d} f(\vv) \widetilde{p_t}(d \vv) 
= 
\left\langle \int_{\vv \in \R^d} f(\vv) \mathcal{E}(d \vv) \widetilde{\xi}, \widetilde{\xi} \right\rangle
= 
\int_{\vv \in \R^d} f(\vv) \left\langle \mathcal{E}(d \vv) \widetilde{\xi}, \widetilde{\xi} \right\rangle.\]
It follows that the sequence of probability measures $\left\{ \widetilde{p_t} \right\}$ weakly converges.
\end{proof}

\subsection{A quantum walk with an initial unit vector whose support is not localized.}
The convergence theorem 
(Theorem \ref{theorem: limit distribution}) holds true for an arbitrary initial unit vector.
Let us consider an example, in which the support of the initial unit vector $\xi$ is the whole space $\Z$.
We define $U$ by the $3$-state Grover walk
\[
U =
\left(
\begin{array}{ccc}
S & 0 & 0\\
0 & 1 & 0\\
0 & 0& S^{-1}
\end{array}
\right)
\cdot \frac{1}{3}
\left(
\begin{array}{rrr}
-1 & 2 & 2\\
2 & -1 & 2\\
2 & 2 & -1
\end{array}
\right)
\]
acting on $\ell_2(\Z) \otimes \C^3$.
We set the initial unit vector $\xi$ by the infinite sum
\[\xi = 
\frac{1}{\sqrt 3}
\sum_{x \in \Z} \delta_x 
\otimes 
\left(
\begin{array}{c}
0\\
2^{- |x| / 2} \\
0
\end{array}
\right).
\]
When we regard $\xi$ as a map from $\Z$ to $\C^3$,
the support of $\xi$ is $\Z$.
However, according to
Theorem
\ref{theorem: asymptotic concentration},
the limit distribution of velocity $\{p_t\}_{t = 1}^{\infty}$ defined in 
\ref{subsection: asymptotic behavior of distributions} 
should have compact support.
To see what happens, let us look at the following results of numerical calculations.

\begin{figure}
\includegraphics[width = 80mm]{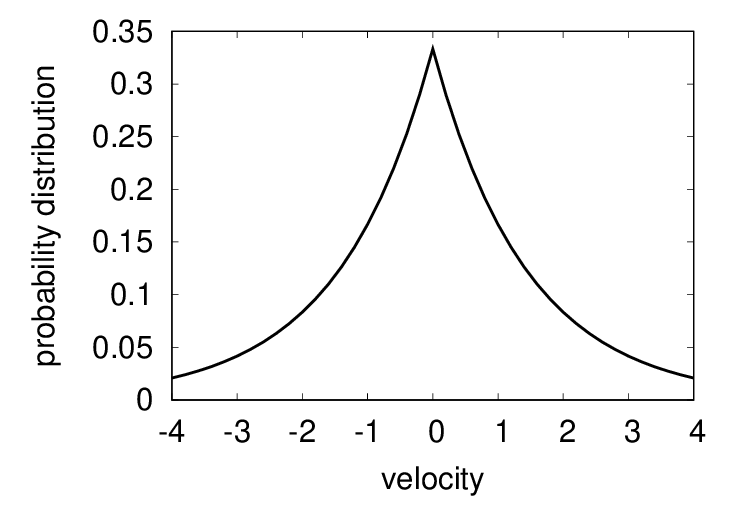}
\caption{The distribution of position according to the initial state $\xi$ in Subsection 4.3}
\end{figure}
\begin{figure}
\includegraphics[width = 80mm]{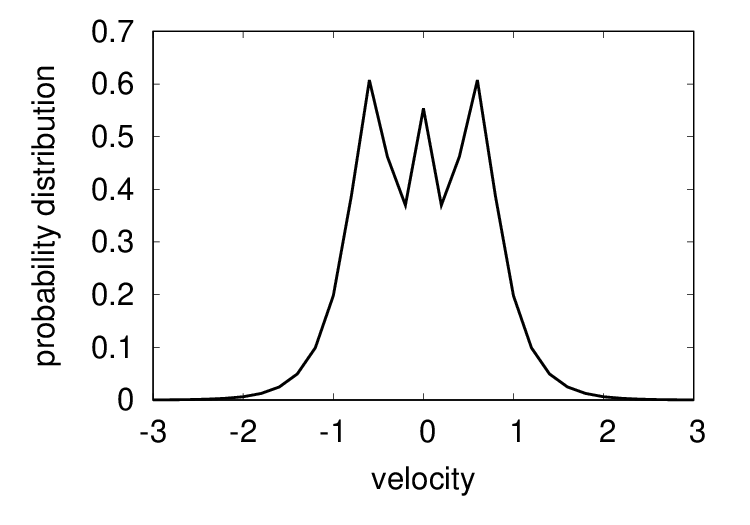}
\caption{The distribution of velocity $p_5$ given by the state $U^5 \xi$}
\end{figure}
\begin{figure}
\includegraphics[width = 80mm]{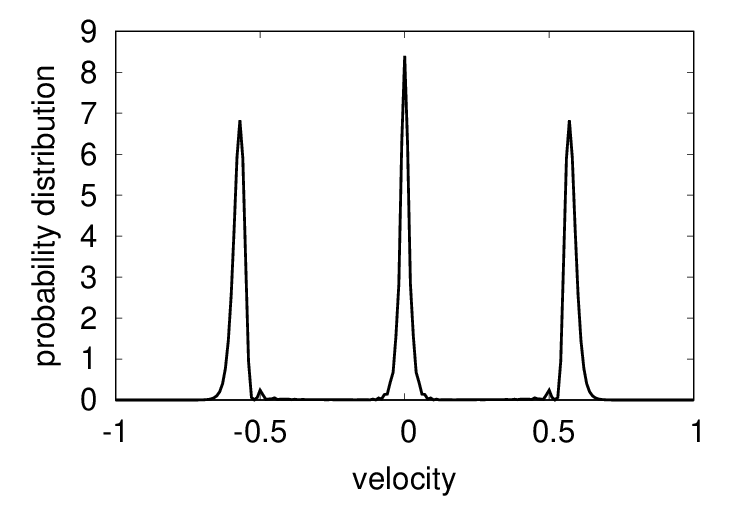}
\caption{The distribution of velocity $p_{100}$ given by the state $U^{100} \xi$}
\end{figure}

FIGURE 1 shows the probability of position in $\Z$ according to the quantum state 
$\xi$.
The left end stands for $x = -4$ and the right end stands for $x = 4$.
Since the support of $\xi \colon \Z \to \C^3$ is $\Z$, 
the probability never vanishes.

FIGURE 2 shows the probability distribution on $\frac{1}{5}\Z$ defined by the unit vector
$U^5 \xi$.
The left end stands for $x = -3$ and the right end stands for $x = 3$.
The tails on the both sides become closer the the level of $0$.

FIGURE 3 shows the probability distribution on $\frac{1}{100}\Z$ defined by the unit vector
$U^{100} \xi$.
The left end stands for $x = -1$ and the left end stands for $x = 1$.
Although the support of $U^{100} \xi \colon \Z \to \C^3$ is not supported on a compact set, the distribution is very close to that with compact support and the difference is invisible.
The probability on the interval $[- 0.7, 0.7]$ is more than $0.999$.

\section{Interpretation of this paper from the view point of quantum physics}

Many researchers have intensively studied quantum walks acting on the Hilbert space $\ell_2(\Z^d) \otimes \C^n$.
The natural number $d$ stands for the dimension of the space.
The natural number $n$ stands for the local degree of freedom at a point
in $\Z^d$.
This paper enables us to generalize the framework of the theory of quantum walks.
In the following, we see how our definitions and theorems in this paper enlarge the theory of quantum walks.

\paragraph{\bf General $n$}
In the present study of quantum walks, the local degree of freedom is restricted to small natural numbers $n$ such as $2$.
This is because researchers consider concrete physical observable such as the quantum spin. This observable can be described as a self-adjoint matrix acting on $\C^2$.
In our new framework described in Definition \ref{definition: QWs},
we can consider large finite dimensional local Hilbert space $\C^n$.
The results in Section \ref{section: general theory of asymptotic behavior}
can be applied to the case that the local Hilbert spaces are infinite dimensional.

\paragraph{\bf General $d$}
The space $\Z^d$ has a physical meaning in the case of $d = 1, 2, 3$.
We no longer have to divide our studies into these three cases.

\paragraph{\bf Quantum walks on more general spaces}
It is no longer necessary to stick to the concrete lattice $\Z^d$ inside $\R^3$.
Let us consider the case that some devices or atoms are located on some discrete subset $X \subset \R^d$. 
In this paragraph, we put no requirement related to symmetry on $X$. Let us consider some Hilbert spaces $\Hil_x$ is attached to every point $x \in X$, where $\Hil_x$ may be infinite dimensional.
Here, unit vectors in $\Hil_x$ describe quantum states at $x \in X$.
The whole Hilbert space $\Hil$ is defined by the direct sum Hilbert space $\oplus_{x \in X} \Hil_x$, and the spectral measure $E$ is given by the orthogonal projections $E(\Omega)$ from $\Hil$ onto $\oplus_{x \in \Omega \cap X} \Hil_x$.
Therefore, our new framework encompasses dynamical systems 
on arbitrary solid structures given by atoms and on those by devices.

\paragraph{\bf Quantum walks without finite propagation}
Almost all the known quantum walks has {\it finite propagation}.
The term ``{\it finite propagation}'' is defined as follows:
the operator $U$ on $\Hil = \oplus_{x \in X} \Hil_x$ is said to have finite propagation, if there exists a positive number $R$ such that for every $x, y \in X$, if $\mathrm{dist}(x, y) > R$, then $U \Hil_x$ and $\Hil_y$ are perpendicular.
The meaning of this condition is that if $x, y$ are distant, and $\xi \in \Hil$ is located at $\Hil_x$ and $\eta \in \Hil$ is located at $\Hil_y$, then the transition probability $|\langle U \xi, \eta \rangle|^2$ is zero.
It is easy to show that having finite propagation implies analyticity defined in Definition \ref{definition: analyticity}.
The converse does not hold.
The theory in this paper uses analyticity or more mild conditions.
Thus we obtain a wider framework.

\paragraph{\bf Quantum walks on arbitrary crystal lattices}
Preparing a general framework is not the only goal of this paper.
Let us apply our mathematical argument to quantum walks on crystal lattices.
For every crystal lattice $X \subset \R^d$, there exists an additive subgroup $G \subset \R^d$ which is isomorphic to $\Z^d$ such that $X$ is invariant under the addition by $G$.
Here $d$ is $2$ or $3$.
The Hilbert space $\Hil = \oplus_{x \in X} \Hil_x$ also has translation symmetry under the action of $G$,
in other words, for every $g \in G$ and $x \in X$, $g + x \in X$ and $\Hil_{g + x} = \Hil_x$.
Let $U$ be a unitary operator on $\Hil$.
The unitary operator satisfies a convergence theorem, if the following assumption holds:
\begin{itemize}
\item
All the local Hilbert spaces $\Hil_x$ are finite dimensional.
\item
The unitary operator $U$ has the {\it translation symmetry} with respect to $G$. More precisely, if we denote by $\rho(g)$ the unitary operator 
on $\Hil$ given by the shift $g \in G$, then $U \rho(g) = \rho(g) U$.
\item
The unitary operator $U$ has {\it finite propagation}.
\end{itemize}
Almost all the known quantum walks with translation symmetry on crystal lattices satisfy these conditions.

\begin{corollary}
For any initial unit vector $\xi \in \Hil$, and for the quantum walk $(\Hil, (U^t)_{t \in \Z}, E)$ on arbitrary crystal lattice satisfying above conditions,
the weak limit of the distribution of velocity $\{p_t\}$ defined in Subsection
\ref{subsection: asymptotic behavior of distributions} exists.
\end{corollary}

\begin{proof}
A quantum walk with {\it finite propagation} is {\it analytic}.
The quantum walk $U$ is space-homogeneous with respect to the additive group $G \cong \Z^d$.
This corollary is a direct conclusion of Theorem \ref{theorem: limit distribution}.
\end{proof}

%
%\paragraph{\bf Localization property similar to relativistic localization of causal relation}
%Let $\R^3 \times \R = \{(x, t) \ | \ x \in \R^3, t \in \R\}$ be a representation of the space-time.
%Relativistic localization of causal relation means that some event at $(x_0, t_0)$ does not have any effect outside of the cone
%\[C_{(x_0, t_0)}  = 
%\{(x, t) \in \R^3 \times \R \ | \ t_0 \le t, \mathrm{dist}(x, x_0) \le c(t - t_0)  \},\] 
%where $c$ stands for the speed of light.

\paragraph{\bf Physical meaning of $1$-cocycles and logarithmic derivatives of quantum walks}
Let $A$ be an observable described by a self-adjoint operator acting on a dense subspace of $\Hil = \oplus_{x \in X} \Hil_x$.
Let $(U^t)_{t \in \Z}$ be a dynamical system acting on $\Hil$.
According to the Heisenberg representation of time evolution,
the sequence $(U^{-t} A U^t)_{t \in \Z}$ stands for the evolution of the observable with time.
The $U^{-t} A U^t - A$ stands for the difference between two observables $A$ and $U^{-t} A U^t$, where the latter represents the observable after $t$.
Let us consider the case that $U^{-1} A U - A$ is bounded.
In this case the possible values of the observable $U^{-1} A U - A$ is restricted to some closed interval.
The family of bounded self-adjoint operators $(c_t)_{t \in \Z} = (U^{-t} A U^t - A)_{t \in \Z}$ 
forms a $1$-cocycle introduced in Definition \ref{definition: 1-cocycle}.
Thus using $1$-cocycle, we can treat the difference of the observable $A$ and that after $t$.
Proposition \ref{proposition: convergence of LD for homogeneous QW}
means that the average of the $1$-cocycle $\frac{1}{t} (U^{-t} A U^t - A)$
converge if the quantum walk is analytic and homogeneous.
This can be applied to analytic quantum walks on crystal lattices with translation symmetry.

In our mathematical argument for the convergence theorem (Theorem
\ref{theorem: limit distribution}),
a $1$-cocycle called the logarithmic derivative plays a key role.

\end{document}